\title{Parameterized Complexity Results for a Model of Theory of Mind Based on Dynamic Epistemic Logic\thanks{This research has been carried out in the context of the first author's master's thesis \cite{vandepol2015}.}}
\author{Iris van de Pol\thanks{Supported by Gravitation Grant 024.001.006 of the Language in Interaction Consortium from the Netherlands Organization for Scientific Research.}
\institute{Institute for Logic, Language\\and Computation}
\institute{University of Amsterdam}
\email{i.p.a.vandepol@uva.nl}
\and
Iris van Rooij
\institute{Donders Institute for Brain,\\Cognition, and Behaviour}
\institute{Radboud University}
\email{\quad i.vanrooij@donders.ru.nl}
\and
Jakub Szymanik\thanks{Supported by the Netherlands Organisation for Scientific Research Veni Grant NWO-639-021-232.}
\institute{Institute for Logic, Language\\and Computation}
\institute{University of Amsterdam}
\email{j.k.szymanik@uva.nl}
}
\newcommand{\set}[1]{\{#1\}}
\newcommand{\w}{\wedge}
\newcommand{\se}{\ensuremath{\subseteq}}
\newcommand{\ra}{\rightarrow}
\newcommand{\Sm}{\Sigma}
\renewcommand{\phi}{\varphi}
\newtheorem{definition}{\sc Definition}
\numberwithin{definition}{section}
\newtheorem{theorem}{\sc Theorem}
\newtheorem{proposition}[theorem]{\sc Proposition}
\newcommand{\mtext}[1]{\text{\normalfont #1}}
\newcommand{\pre}{\text{\textit{pre}}}
\newcommand{\post}{\text{\textit{post}}}
\newcommand{\varp}[0]{\textit{var}($\phi$)}
 \definecolor{darkgreen}{rgb}{0.0, 0.5, 0.5}
\newenvironment{myquote}{\begin{center}
    \begin{minipage}{.95\linewidth}}{\end{minipage}\end{center}}
\newcommand{\probdefd}[3]{
  \begin{myquote}
  \begin{framed}
    #1
    
    \textit{Instance:} #2
    
    \textit{Question:} #3
    \end{framed}
  \end{myquote}
}
\newcommand{\probdefp}[4]{
    \begin{myquote}
    \begin{framed}
    #1
    
    \textit{Instance:} #2

	\textit{Parameter:} #3
    
    \textit{Question:} #4
  
    \end{framed}
  \end{myquote} 
}
\DeclareRobustCommand{\rvdots}{%
  \vbox{
    \baselineskip4\p@\lineskiplimit\z@
    \kern-\p@
    \hbox{.}\hbox{.}\hbox{.}
  }}
\newcommand{\dbu}[0]{{\sc DBU}}
\newcommand{\gdbu}[0]{{\sc DBU}}
\newcommand{\ogdbu}[0]{$\set{o}$-{\sc DBU}}
\newcommand{\agdbu}[0]{$\set{a}$-{\sc DBU}}
\newcommand{\pgdbu}[0]{$\set{p}$-{\sc DBU}}
\newcommand{\eugdbu}[0]{$\set{e,u}$-{\sc DBU}}
\newcommand{\acpgdbu}{\ensuremath{\set{a,c,p}\text{-{\sc DBU}}}}
\newcommand{\cfpugdbu}{\ensuremath{\set{c,f,p,u}\text{-{\sc DBU}}}}
\newcommand{\aefopgdbu}{\ensuremath{\set{a,e,f,o,p}\text{-{\sc DBU}}}}
\newcommand{\cefopgdbu}{\ensuremath{\set{c,e,f,o,p}\text{-{\sc DBU}}}}
\newcommand{\copugdbu}{\ensuremath{\set{c,o,p,u}\text{-{\sc DBU}}}}
\newcommand{\afopugdbu}{\ensuremath{\set{a,f,o,p,u}\text{-{\sc DBU}}}}
\newcommand{\acefogdbu}{$\set{a,c,e,f,o}$-{\sc DBU}}
\newcommand{\acfougdbu}{$\set{a,c,f,o,u}$-{\sc DBU}}
\newcommand{\sat}[0]{{\sc Sat}}
\newcommand{\pwsat}{\ensuremath{\set{k}\text{-{\sc WSat\mtext{[2CNF]}}}}}
\newcommand{\mclique}{\ensuremath{\set{k}\text{-{\sc Multicolored Clique}}}}
\newcommand{\paranp}{\mtext{para-NP}}
\newcommand{\SBs}{\{}%
\newcommand{\SEs}{\}}%
\renewcommand{\P}{\text{\normalfont P}}
\newcommand{\np}{\text{\normalfont NP}}
\newcommand{\FPT}{\text{\normalfont FPT}}
\newcommand{\W}[1]{\text{\normalfont W[#1]}}
\newcommand{\pspace}{\ensuremath{\mtext{PSPACE}}}
\newcommand{\tqbf}{\ensuremath{\mtext{\sc TQBF}}}
\newcommand{\LB}{\mathcal{L}_B}
\newcommand{\A}{\mathcal{A}}
\newcommand{\AAA}{\mathsf{A}}
\newcommand{\EEE}{\mathcal{E}}
\newcommand{\LLL}{\mathcal{L}}
\newcommand{\nn}{\mathbb{N}}
\newcommand{\CNF}{\mtext{\sc CNF}}
\DeclareRobustCommand{\VAN}[3]{#3}
\begin{document}
\maketitle

\begin{abstract}
In this paper we introduce a 
computational-level model of theory of mind (ToM) based on dynamic epistemic logic (DEL), and we analyze its computational complexity. The model is a special case of DEL model checking. We provide a parameterized complexity analysis, considering several aspects of DEL
 (e.g., number of agents, size of preconditions, etc.) as parameters.
We show that model checking for DEL is PSPACE-hard, also when restricted to single-pointed models and S5 relations, thereby solving an open problem in the literature.
Our approach is aimed at formalizing current intractability 
claims in the cognitive science literature regarding computational models of ToM. 
\end{abstract}

%%% INTRODUCTION %%%%%%%%%%%%%%%%%%%%%%%%%%%%%%%%%%%%%%%%%%%%%%%%%%
\section{Introduction}

Imagine that you are in love. You find yourself at your desk, but you cannot stop your mind from wandering off. What is she thinking about right now? And more importantly, is she thinking about you 
and  does she know that you 
are thinking about her?
Reasoning about other people's knowledge, belief and desires, we do it all the time. 
For instance, 
in trying to conquer the love of one's life, to 
stay one step ahead of one's enemies,  
or   when we lose our friend in a crowded place and we
 find them by imagining where they would look for us.
This capacity is known as theory of mind (ToM) and it is widely 
studied in  various fields (see, e.g., 
\cite{bolander2014,brauner2013,frith2001,nichols2003,perea2012,premack1978,verbrugge2009,wellman2001}).

We seem to use ToM on a daily {basis and} 
many cognitive scientists consider
it to be ubiquitous in social interaction \cite{apperly2010}. 
At the same time, however, it is also widely believed that computational cognitive models 
of ToM  are intractable, i.e., that ToM involves
solving problems that humans are not capable of solving
(cf.~\cite{apperly2010,haselager1997,levinson2006,zawidzki2013}).
This seems to imply a contradiction between theory 
and practice: 
on the one hand we seem to be capable of ToM, while
on the other hand, our theories tell us that this is impossible.
Dissolving this paradox is a {critical} step in enhancing theoretical understanding of ToM.

The question arises what it means for a computational-level model\footnote{In cognitive science, often Marr's \cite{marr1982} tri-level distinction between computational-level (``what is the nature of the problem being solved?"), algorithmic-level (``what is the algorithm used for solving the problem?"), and implementational-level (``how is the algorithm physically realized?") is used to distinguish different levels of computational cognitive explanations. In this paper, we will focus on computational-level models of ToM and their computational complexity.}
of cognition to be intractable.
 When looking more closely at these intractability claims
 regarding ToM, it is not clear
what these researchers mean exactly,
nor whether they mean the same thing.
In theoretical computer science and logic there are a variety of tools to make precise 
claims about the level of complexity of a certain 
problem.
 In cognitive science, however, this is a different story.
 With the exception of a few researchers, cognitive scientists 
 do not  tend to  specify formally what it means for a theory to be intractable. 
 This makes it often very difficult to assess the validity 
 of the various claims in the literature about 
 which theories are {tractable} and which are not.

\sloppypar
In this paper we 
adopt the \textit{Tractable-cognition thesis} (see~\cite{vanrooij2008}) that states that people have 
limited resources for cognitive processing and  
human cognitive capacities are confined to those 
that can be realized using a realistic amount of time.\footnote{There is general consensus in the cognitive science community that computational intractability is a undesirable feature of cognitive computational models, putting the cognitive plausibility of such models into question \cite{cherniak1986,frixione2001,gigerenzer2008,vanrooij2008,tsotsos1990}. There are diverging opinions about how cognitive science should deal with this issue (see, e.g., \cite{Chater2006,gigerenzer2008,vanrooij2014,vanrooij2012}). It is beyond the scope of this paper to discuss this in detail. In this paper we adopt the parameterized complexity approach as described in \cite{vanrooij2008}.}
More specifically we  adopt the \textit{FPT-cognition thesis} \cite{vanrooij2008} that states that computationally plausible computational-level cognitive theories are limited to the class of input-output mappings that are fixed-parameter tractable for one or more input-parameters that can  be assumed to be small in practice. 
To be able to make more precise claims about the
(in)tractability of ToM  we introduce a computational-level model of ToM based on dynamic epistemic logic (DEL), and we analyze its computational complexity.
The model we present is a special case of DEL model checking.
Here we include an informal description of the model.\footnote{
We pose the model in the form of a decision problem, as this is convenient for purposes of our complexity analysis. Even though ToM may be more intuitively modeled by a search problem, the complexity of the decision problem gives us lower bounds on the complexity of such a search problem, and therefore suffices for the purposes of our paper. 
} 
The kind of situation that we want to be able to model, is that of
an observer that observes one or more agents in an initial situation. The observer then witnesses actions that change the situation and the observer updates their knowledge about the mental states of the agents in the new situation. Such a set up is often found in experimental tasks, where subjects are asked to reason about the mental states of agents in a situation that they are presented.

\probdefd{ \dbu{} (informal) -- {\sc Dynamic Belief Update}
}{A representation of an initial situation, a sequence of actions -- observed by an observer -- and a (belief) statement~$\phi$ of interest. 
}{Is the (belief) statement~$\phi$ true  in the 
situation resulting from the initial situation and the observed
actions?
}

We prove that \dbu{} is \pspace{}-complete. \pspace{}-completeness was already shown by Aucher and Schwarzentruber \cite{aucher2013} for 
DEL model checking 
in general. They considered unrestricted relations and multi-pointed event models. Since their proof does not 
hold for the special case of DEL model checking that we consider, we propose  an alternative proof. 
 Our proof solves positively the open question in \cite{aucher2013} whether model checking for DEL restricted to S5 relations and single-pointed models is \pspace{}-complete. 
Bolander, Jensen and Schwarzentruber \cite{bolander2015} independently considered an almost identical special case of DEL model checking (there called the plan verification problem). They also prove \pspace{}-completeness for the case restricted to 
single-pointed models, but their proof does not settle whether hardness holds 
even when the problem is restricted to S5 models.

Furthermore, we investigate how the different aspects (or parameters, see Table~\ref{table:parameters}) of our model influence its complexity. We prove that for most combinations of parameters \dbu{} is fp-intractable and for one case we prove
fp-tractability. See Figure~\ref{fig:overview} for an overview of the results. 

Besides the parameterized complexity results for DEL model checking that we present, the main conceptual contribution of this paper is that it bridges cognitive science and logic, by using DEL to model ToM (cf.~\cite{isaac2014,verbrugge2009}). By doing so, the paper provides the means to make more precise statements about the (in)tractability of ToM.

The paper is structured as follows.  In Section~\ref{section:preliminaries} we introduce 
basic definitions from dynamic epistemic logic and parameterized complexity theory. 
Then, in Section~\ref{section:complevelmodel} we introduce a formal 
description of our computational-level model and we discuss the particular choices that we make.
Next, in Section~\ref{complexresults} we present our (parameterized) 
complexity results. Finally, in Section~\ref{discussion} we discuss the implications of our results for the understanding of ToM.

%%% DEL %%%%%%%%%%%%%%%%%%%%%%%%%%%%%%%%%%%%%%%%%%%%%%%%%%%%%%
\section{Preliminaries}
\label{section:preliminaries}
\subsection{Dynamic Epistemic Logic}
\label{DELformal}

Dynamic epistemic logic is a particular kind of modal logic 
(see
 \cite{ditmarsch2008,Ben11}),
where the modal operators are interpreted in terms of belief 
or knowledge. First, we define epistemic models, which are  Kripke models
 with an accessibility relation for every agent~$a \in \A$,
instead of just one accessibility relation.

\begin{definition}[Epistemic model] 
Given a finite set~$\A$ of agents and a finite set~$P$ of propositions, an epistemic model is a tuple~$(W,R,V)$ where
\begin{itemize}
\item $ $ $W$ is a non-empty set of worlds;
\item $ $ $R$
is a function that assigns to every agent $a \in \A$ a binary relation $R_a$ on $W$; and
\item $ $ $V$
is a valuation function from  $W \times P$ into $\set{0,1}$.
\end{itemize}

\end{definition}

The accessibility relations~$R_a$ can be read as follows: for worlds $w,v \in W$, $w R_a v$ 
means ``in world $w$, agent~$a$ considers world~$v$ possible.”

\begin{definition}[(Multi and single-)pointed
  epistemic model]
A pair~$(M, W_d)$ consisting of an epistemic model~$M=(W,R,V)$  and a non-empty set of designated worlds~$W_d \se W$ is called a pointed epistemic model. A pair $(M, W_d)$ is called a single-pointed model when~$W_d$ is a singleton, and a multi-pointed epistemic model when~$|W_d| > 1$. By a slight abuse of notation, for~$(M, \set{w})$, we also write~($M,w$). 
\end{definition}

We consider the usual restrictions on relations in epistemic models and event models, such as KD45 and S5 (see \cite{ditmarsch2008}).
In KD45 models, all relations are transitive, Euclidean and serial, and in S5 models all relations are transitive, reflexive and symmetric.

We define the following language for epistemic models. We use the modal 
belief operator~$B$, where for each agent~$a \in \A$,~$B_a \phi$ is 
interpreted as ``agent~$a$ believes (that)~$\phi$".
 
\begin{definition}[Epistemic language]
The  
language~$\LB$ over~$\A$ and~$P$ 
 is given by the following definition, where~$a$ ranges over~$\A$ and~$p$ over~$P$:
\[ \phi :: = p \ | \ \neg \phi \ | \ (\phi \wedge \phi) \ | \ B_a \phi. \]
We will use the following standard abbreviations,~$\top := p \vee \neg p, \bot := \neg \top, \phi \vee \psi := \neg(\neg \phi \wedge \neg \psi)$, $\phi \ra \psi := \neg \phi \vee \psi$, $\hat{B_a}:= \neg B_a \neg \phi$.
\end{definition} 

The semantics 
 for this language is defined as follows.

\begin{definition}[Truth in a (single-pointed) epistemic model]
Let~$M=(W,R,V)$ be an epistemic model,~$w \in W$,~$a \in \A$, and~$\phi, \psi \in \LLL_B$. We define~$M, w \models  \phi$ inductively as follows:
\begin{center}
\begin{tabular}{l l l}
$M,w \models p$ & iff & $V(w,p)=1$
\\
$M,w \models \neg \phi$ & iff & not $M, w \models \phi$
\\
$M,w \models  (\phi \w \psi)$ & iff &  $M, w \models \phi$ and $M, w \models \psi$
\\
$M,w \models B_a \phi$ & iff & for all $v$ with $wR_av$: $M, v \models \phi$
\\
\end{tabular}
\end{center}
When~$M,w \models \phi$, we say that~$\phi$ is true in~$w$ or~$\phi$ is \textit{satisfied} in~$w$. 
\end{definition}

\begin{definition}[Truth in a multi-pointed epistemic model]
Let $(M,W_d)$ be a multi-pointed epistemic model, $a \in \A$, and $\phi\in \LB$. $M, W_d \models  \phi$ is defined as follows:
\begin{center}
 \begin{tabular}{l l l}
 $M,W_d \models \phi$ & iff & $M,w \models \phi$ for all $w \in W_d$
 \\
 \end{tabular}
\end{center}
\end{definition}

Next we define event models.

%\begin{definition}[Event model]
%An event model is a tuple $\mathcal{E} = (E, Q, \textit{\text{pre}}, \textit{\text{post}})$, where 
%\begin{itemize}
%\item $E$ is a non-empty finite set of events;
%\item $Q$ is a function that assigns to every agent $a \in \A$ a binary relation $R_a$ on $W$;
%\item \textit{pre} is a function from $E$ into $\LLL_B$ that assigns to each event a precondition, which can be any formula in~$\LLL_B$; and
%\item \textit{post} is a function from $E$ into $\LLL_B$ that assigns to each event a postcondition. Postconditions 
%%cannot be any formula in~$\LLL_B$; they are conjunctions of propositional literals, that is, 
%are conjunctions of propositions and their negations (including $\top$ and $\bot$).
%\end{itemize}
%\end{definition}

\begin{definition}[Event model]
An event model is a tuple $\mathcal{E} = (E, Q, \textit{\text{pre}}, \textit{\text{post}})$, where 
$E$ is a non-empty finite set of events;
$Q$ is a function that assigns to every agent $a \in \A$ a binary relation $R_a$ on $W$;
\textit{pre} is a function from $E$ into $\LLL_B$ that assigns to each event a precondition, which can be any formula in~$\LLL_B$; and
\textit{post} is a function from $E$ into $\LLL_B$ that assigns to each event a postcondition. Postconditions 
are conjunctions of propositions and their negations (including $\top$ and $\bot$).
\end{definition}

\begin{definition}[(Multi and single-)pointed  event model / action]
A pair~$(\EEE, E_d)$ consisting of an event model~$\EEE=(E,Q,\pre,\post)$  and a non-empty set of designated events $E_d \se E$ is called a pointed event model. A pair $(\EEE, E_d)$ is called a single-pointed event model when~$E_d$ is a singleton, and a multi-pointed event model when~$|E_d| > 1$. 
We will also refer to~$(\EEE, E_d)$ as an action.
\end{definition}

We define the notion of a product update, that is used to update
epistemic models with actions 
\cite{baltag1998}.

\begin{definition}[Product update]
The  product update of the state~$(M, W_d)$ with the action~$(\EEE, E_d)$ is defined as the state~$(M, W_d) \otimes (\EEE, E_d) = ((W', R', V'), W'_d)$ where
\begin{itemize}
\item $W' = \set{(w,e) \in W \times E\ ;\ M,w \models \text{pre}(e)}$;
\item $R'_a = \set{((w,e), (v,f)) \in W' \times W' \ ; \ wR_av \text{ and } eQ_af}$;
\item $V'(p) = 1$ iff either  $(M,w \models p$ and $ post(e) \not\models \neg p)$ or  $post(e) \models p$; and
\item $W'_d = \set{(w,e) \in W' \ ;\  w \in W_d \text{ and } e \in E_d}$.
\end{itemize}
\end{definition}

Finally, we define when actions are applicable in a state.

\begin{definition}[Applicability]
An action $(\EEE, E_d)$ is applicable in state $(M,W_d)$ if there is some $e \in E_d$ and some $ w \in W_d$ such that $M,w \models pre(e)$. We define applicability for a sequence of
actions inductively. The empty sequence, consisting of no actions, is always applicable. A sequence $a_1, \dots, a_k$ of actions is applicable in a state $(M,W_d)$ if (1) the sequence $a_1, \dots, a_{k-1}$ is applicable in $(M,W_d)$ and (2) the action $a_k$ is applicable in the state $(M,W_d) \otimes a_1 \otimes \dotsm \otimes a_{k-1}$.
\end{definition}

\subsection{Parameterized Complexity Theory}
\label{sectionparameter}

We introduce some basic concepts of {parameterized} complexity theory. For a more detailed introduction we refer to textbooks on the topic
\cite{downey1999,downey2013,flum2006,niedermeier2006}.

\begin{definition}[Parameterized problem]
Let~$\Sm$ be a finite alphabet.
A \emph{parameterized problem}~$L$ (over~$\Sm$) is a subset of~${\Sm^*} \times \nn$. For an \emph{instance}~$(x,k)$, we call~$x$ the \emph{main part} and~$k$ the \emph{parameter}. 

\end{definition}

The complexity class \FPT{}, which stands for fixed-parameter tractable,
is the direct analogue of the class \P{} in classical complexity. 
Problems in this class are considered efficiently solvable, because
the non-polynomial-time complexity inherent in the problem is confined to the parameter and in effect the problem is efficiently solvable even for large input sizes, provided 
that the value of the parameter is relatively small.

%for any fixed (i.e., constant) value of the parameter, these problems are solvable in polynomial
%time 
%(where the order of the polynomial does not depend on the parameter
%value).

\begin{definition}[Fixed-parameter tractable / {t}he class FPT]
Let~$\Sm$ be a finite alphabet. 
\begin{enumerate}
\item
An algorithm~$\AAA$ with input~$(x,k) \in \Sm \times \nn$ runs in  \emph{fpt-time} if there exists a computable function $f$ and a polynomial $p$ such that for all~$(x, k) \in \Sm \times \nn$, the running time of~$\AAA$ on~$(x,k)$ is at most
\[f(k) \cdot p(|x|).\]

\sloppypar
 Algorithms that run in fpt-time are called \emph{fpt-algorithms}.
\item
\sloppypar
A parameterized problem~$L$ is \emph{fixed-parameter tractable} if there is an fpt-algorithm that decides $L$.
 \emph{FPT} denotes the class of all fixed-parameter tractable problems.
\end{enumerate}
\end{definition}

Similarly to classical complexity, parameterized complexity also offers
a hardness framework to give evidence that (parameterized) problems are not
fixed-parameter tractable. The following notion of reductions plays an important
role in this framework.

\begin{definition}[Fpt-reduction] Let~$L \se \Sm \times \nn$ and~$L' \se \Sm' \times \nn$ be two parameterized problems.
An \emph{fpt-reduction} from ~$L$ to~$L'$ is a mapping~$R: \Sm \times \nn \ra \Sm' \times \nn $ from  instances of~$L$ to instances of~$L'$ such that there is a computable function~$g: \nn \ra \nn$ such that for all~$(x,k) \in \Sm \times \nn$:
\begin{enumerate}
\item $ (x',k')=R(x,k)$ is a yes-instance of $L'$ if and only if~$(x,k)$~is a yes-instance of $L$; 
\item $R$ is computable in fpt-time; and
\item  $k' \le g(k)$.
\end{enumerate}
\end{definition}

{Another important part of the hardness framework is the parameterized
intractability class W[1].
To characterize this class}, we consider the following parameterized problem.

\probdefp{\pwsat}{A 2CNF propositional formula $\phi$ and an integer~$k$.}{$k$.}{Is there an assignment~$\alpha:$ \varp $\ \ra \set{0,1}$,  that sets~$k$ variables in \textit{var}($\phi$) to true, that satisfies~$\phi$?}

The class \W{1} consists of all parameterized problems that can be fpt-reduced
to \pwsat{}.
A parameterized problem is hard for W[1] if all problems in W[1] can be
fpt-reduced to it. It is widely believed that W[1]-hard problems are not
fixed-parameter tractable \cite{downey2013}.
Another parameterized intractability class, that can be used in a similar way,
is the class para-NP.
The class \paranp{} consists of all parameterized problems that can be solved by a nondeterministic fpt-algorithm.
To show para-NP-hardness, it suffices to show that \gdbu{} is NP-hard for a constant value of the parameters \cite{flum2003}.
Problems that are para-NP-hard are not fixed-parameter tractable,
unless~$\mtext{P} = \mtext{NP}$ \cite[Theorem~2.14]{flum2006}.

%%% MODELING %%%%%%%%%%%%%%%%%%%%%%%%%%%%%%%%%%%%%%%%%%%%%%%%%%%%%%

\section{Computational-level Model of Theory of Mind}
\label{section:complevelmodel}

Next we present a formal description of  our computational-level model.
Our aim is to capture, in a qualitative way, the kind
of reasoning that is necessary to be able to engage in ToM. 
Arguably, the essence of ToM is
the attribution of mental states to another person, based
on observed behavior, and to predict and explain this behavior
in terms of those mental states. 
%So a necessary part of ToM is the attribution of mental states
%by an observer to an agent, based on observed actions of this agent
%(in a particular situation). 
The aspect of ToM that we aim to formalize with our model is the attribution of mental states. 
There is a wide range of different kinds of mental states such as
epistemic, emotional and motivational states.
In our model we 
focus  on epistemic states, in particular on belief.

To be cognitively plausible, our model needs to be able 
to capture a wide range of (dynamic) situations,
where all kinds of actions can occur, not just actions that change 
beliefs (epistemic actions), but also actions that change 
the  state of the world (ontic actions). This is why, following 
Bolander and Andersen \cite{bolander2011}, we use postconditions in the product update of DEL 
(in addition to preconditions). 
%so that we can model also ontic
%actions in addition to epistemic actions.

Furthermore, we want to model the (internal) perspective of the
observer (on the situation). Therefore, the god 
perspective, also called the perfect external approach 
by Aucher \cite{aucher2010} -- that is inherent to 
 single-pointed epistemic models -- 
 will not suffice for all cases 
that we want to be able to model.
This perfect external approach supposes that the modeler 
is an omniscient observer that is perfectly aware of the 
actual state of the world
 and the epistemic situation 
(what is going on in the minds of the agents).
The cognitively plausible observers that we are
interested in here will not have infallible knowledge in
 many situations.
They are often not  able to distinguish the actual world from
other possible worlds, because they are uncertain about the
 facts in the world and the mental states
of the agent(s) that they observe.
That is why, again following Bolander and Andersen \cite{bolander2011},
we allow for multi-pointed epistemic models
  (in addition to
single-pointed models), which can model the uncertainty of an
observer, by representing their perspective
as a set of worlds.
How to represent the internal or fallible perspective
of an agent in epistemic models is a conceptual problem
that has not been settled yet in the DEL-literature. 
There have been several proposals to deal with this
 (see, e.g., \cite{aucher2010, degremont2014, szymanik2011}).
%Our proposal is technically similar to 
%Aucher's definition of internal models, 
%although formulated somewhat differently.

Also, since we do not assume that agents are perfectly knowledgeable, we allow the possibility of modeling false beliefs of the observers and agents, by using KD45 models (rather than S5 models).
%
%Since we want our model to be cognitively plausible, we do not assume 
%that agents are perfectly knowledgeable. To allow 
%the 
%observers and agents in our representations 
%to have false beliefs
%about the world, we use KD45 models rather than S5 models.
%for KD45 models over S5 models; 
%they hold both for KD45 models and for S5 models
%(in fact, our complexity results hold even for the
%unrestricted case, where no additional axioms are used).
Even though KD45 models present an idealized form
of belief (with perfect introspection and logical omniscience), we argue that at least to some extent 
they are cognitively plausible, and that 
 therefore, for the purpose of this paper,
it suffices to focus on KD45 models.
Our complexity results (which we present in the next section)
 do not depend on this choice; they hold for DBU restricted to KD45 models and restricted to S5 models, and also for the unrestricted case.
%The fact that these models are {well-studied} in the literature,  
%contributes to the relevance of the complexity results 
%that we present.

We define our computational-level model of ToM as follows. 

\probdefd{ \dbu{} (formal) -- {\sc Dynamic Belief Update}
}
{A set of propositions P, and set of Agents~$\A$. An initial state~$s_o$, where~$s_o =  ((W,V,R), W_d)$ is a pointed epistemic model. An applicable sequence of actions~$a_1, ...,a_k$, where~$a_j=( (E, Q, pre, post), E_d)$ is a pointed event model. A formula~$\phi \in \LLL_B$.
}{Does~$s_o \otimes a_1 \otimes ... \otimes a_k \models \phi$?
}

The model can  be naturally used to formalize ToM tasks that are employed in psychological experiments. The classical ToM task
 that is used by (developmental) psychologists is the false belief 
 task \cite{baron1985,perner1983}. The DEL-based
 formalization of the false belief task by  Bolander \cite{bolander2014} 
 can be seen as an instance of \gdbu{}.
 For more details on how \gdbu{} can be used to model ToM tasks, we refer to \cite{vandepol2015}.
%NOW IT'S MENTIONED IN INTRODUCTION AND DETAILS COME LATER
%For the reader that is familiar with DEL it is worth noting that, 
%when restricted to single-pointed epistemic models and event models 
%without postconditions (i.e., where
%the postconditions are~$\top$), our definition of \gdbu{} 
%is a particular case of the model checking problem of DEL 
%(cf.~\cite{aucher2013}). Therefore, 
%several of our complexity 
%results in Chapter~\ref{complexresults} also hold for the model 
%checking problem of DEL. 

%%% COMPLEXITY RESULTS %%%%%%%%%%%%%%%%%%%%%%%%%%%%%%%%%%%%%%%%%%%%
\section{Complexity Results}
\label{complexresults}

\subsection{PSPACE-completeness}
\label{sectiongeneralcomplex}

We show that \gdbu{} is PSPACE-complete.
For this, we consider the decision problem \tqbf{}. 
This problem is \pspace{}-complete \cite{stockmeyer1973}.
\probdefd{
  \tqbf{}}{A quantified Boolean formula
    $\varphi = Q_1 x_1 Q_2 x_2 \dotsc Q_m x_m. \psi$.}{Is~$\varphi$ true?
}

\begin{theorem}
\label{thm4}
\gdbu{} is \pspace{}-hard.
\end{theorem}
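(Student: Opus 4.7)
I would reduce from \tqbf{}. Given a QBF $\varphi = Q_1 x_1 \dots Q_m x_m.\, \psi$, I would build a \gdbu{} instance whose initial state is a single world with no propositions true, whose sequence of $m$ actions incrementally creates $2^m$ worlds (one per truth assignment to $x_1,\ldots,x_m$), and whose query formula rewrites $\varphi$ by replacing $\forall x_i$ with $B_{a_i}$, $\exists x_i$ with $\neg B_{a_i}\neg$, and each occurrence of $x_i$ with a fresh proposition $p_i$. The key idea is to use one agent $a_i$ per quantifier and to design the events so that in the final model, $R_{a_i}$ links exactly the pairs of worlds that differ only in the value of $p_i$; then the belief modality $B_{a_i}$ naturally simulates universal quantification over $x_i$ and its dual $\hat{B_{a_i}}$ simulates existential quantification.

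\textbf{The construction.} Take $\A = \{a_1,\ldots,a_m\}$ and $P = \{p_1,\ldots,p_m\}$. The initial state $s_0$ is the single-pointed S5 model with a single world $w_0$ where every $p_i$ is false and every agent's relation is reflexive on $\{w_0\}$. For each $i$, let $\alpha_i$ be the single-pointed event model with two events $e_i^\top, e_i^\bot$ (only $e_i^\top$ designated), both with precondition $\top$ and with postconditions $p_i$ and $\neg p_i$ respectively. In $\alpha_i$'s event relation, agent $a_i$ has the full equivalence relation on $\{e_i^\top, e_i^\bot\}$, while every other agent has only the reflexive relation on each event. Apply $\alpha_1,\ldots,\alpha_m$ in order and query the rewritten formula at the resulting state.

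\textbf{Correctness.} By induction on $k$, I would show that after applying $\alpha_1,\ldots,\alpha_k$ the model contains $2^k$ worlds $w_\beta$ indexed by assignments $\beta$ to $p_1,\ldots,p_k$; the single designated world is $w_{(\top,\ldots,\top)}$; for $i\le k$, $R_{a_i}$ links $w_\beta$ and $w_{\beta'}$ iff they agree on every variable except possibly $p_i$; and for $j>k$, $R_{a_j}$ is still just reflexive. The inductive step is a direct unfolding of the product update: trivial preconditions double every world, agent $a_{k+1}$ acquires the ``swap $p_{k+1}$'' link from the event relation, and the earlier agents' relations are lifted through the reflexive event relations to both copies of each previously-linked pair. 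The translated formula, evaluated at the designated world, then matches the QBF semantics by a straightforward induction on the quantifier prefix: each $B_{a_i}$ (resp.\ $\hat{B_{a_i}}$) universally (resp.\ existentially) quantifies over $p_i$ while the other variables are held to whatever values were chosen by the surrounding modalities, and at the innermost level the propositional $\psi$ is evaluated on a fully specified assignment.

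\textbf{Why the construction stays in the S5 / single-pointed fragment, and what is subtle.} The initial relations and all event relations are equivalence relations, and the product-update definition shows that the product of equivalence relations restricted to the surviving worlds is again an equivalence relation, so S5 is preserved throughout. Each event model is single-pointed and $s_0$ is single-pointed, so after every update exactly one world is designated; this places the reduction inside the fragment whose \pspace{}-hardness was left open in \cite{aucher2013}. The reduction is clearly polynomial: $s_0$ has constant size, each $\alpha_i$ has size $O(m)$ (because we must list the reflexive relations for all agents), and the query formula has size $O(|\varphi|)$. The main step requiring care is the bookkeeping in the inductive correctness argument, specifically verifying that an earlier agent $a_i$'s relation is transported correctly through every later $\alpha_j$ and that no spurious accessibility links are created between worlds differing in more than one proposition.
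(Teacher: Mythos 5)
Your reduction is correct, but it takes a genuinely different route from the paper's. You encode each Boolean variable as an explicit proposition $p_i$ and use \emph{postconditions} (with trivial preconditions $\top$) to split every world into a $p_i$-true and a $p_i$-false copy, letting the event relation of agent $a_i$ install exactly the ``flip $p_i$'' accessibility links; the quantifier prefix then translates directly into $B_{a_i}/\hat{B}_{a_i}$. The paper instead keeps a \emph{single} proposition $y$ and encodes ``$x_i$ is true'' structurally, as the existence of an $R_i$-successor satisfying $y$ from a bottom world (so occurrences of $x_i$ become $\hat{B}_a\hat{B}_i y$); its initial state is a chain of $m+1$ worlds and its actions use nontrivial preconditions ($\neg\hat{B}_i y \vee y$) with postcondition $\top$ to carve out one group of worlds per assignment. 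Both constructions are polynomial, stay within single-pointed S5 models (so both settle the open question from the Aucher--Schwarzentruber paper), and your correctness induction is cleaner because the final model is just the Boolean cube with the obvious relations. What the paper's version buys in exchange for its extra bookkeeping is two stronger by-products that your version does not give: it uses only a constant number of propositions, which the paper later invokes to conclude that \pgdbu{} is para-\pspace{}-hard, and it uses no nontrivial postconditions, so hardness also holds for the postcondition-free fragment; your reduction needs $m$ propositions and relies essentially on postconditions. For the theorem as stated, though, your argument is a complete and valid proof.
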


\begin{proof}
To show \pspace{}-hardness we specify a polynomial-time reduction $R$ from \tqbf{} to \gdbu.
Let $\psi$ be a Boolean formula.
First, we sketch the general idea behind the reduction. 
We use the reduction to list 
all possible assignments to \textit{var}$(\psi)$.
To do this we use groups of worlds (which are~$R_a$-equivalence classes) to represent particular 
truth assignments. 
Each group consists of a string of worlds 
that are fully connected by equivalence relation~$R_a$.
Except for the first world in the string, 
all worlds represent a true variable~$x_i$ (under a particular assignment). 

We give an example of such a group of worlds  that represents assignment $\alpha = \set{x_1 \mapsto \text{T}, x_2 \mapsto \text{F}, x_3 \mapsto \text{T}, x_4 \mapsto \text{T}, x_5 \mapsto \text{F}, x_6 \mapsto \text{T}}$. 
Each world has a reflexive loop for every agent, which we 
leave out for the sake of presentation.  
More generally, in all our drawings we replace each relation~$R_a$ with a minimal~$R_a'$ whose transitive reflexive closure is equal to~$R_a$. \hspace{-8pt}
\scalebox{0.8}{
\begin{tikzpicture}
  \tikzstyle{dnode}=[inner sep=1pt,outer sep=1pt,draw,circle,minimum width=9pt]
  \node[dnode] (w0) at (0,0) {};
  \fill (w0) circle [radius=2pt];
\end{tikzpicture}} 
marks the designated world. Since all relations are reflexive, we draw relations as lines (leaving out arrows at the end). 

\begin{center}
\begin{tikzpicture}[node distance=0.3cm]
  % STYLES
  \tikzstyle{dnode}=[inner sep=1pt,outer sep=1pt,draw,circle,minimum width=9pt]
  \tikzstyle{nnode}=[inner sep=1pt,outer sep=1pt,circle,minimum width=9pt]
  \tikzstyle{label-edge}=[midway,fill=white, inner sep=1pt]
  % DISTINGUISHED NODES
  \node[dnode, label=below:{\scriptsize }] (w0) at (0,0) {};
  \fill (w0) circle [radius=2pt];
  \node[nnode, label=below:{\scriptsize $w_1$}] (w1) at (1.5,0) {};
  \fill (w1) circle [radius=2pt];
   \node[nnode, label={[label distance=-0.15cm]right:{\scriptsize $y$}}] (w11) at (2,0.7) {};
  \fill (w11) circle [radius=2pt];
  \node[nnode, label=below:{\scriptsize $w_2$}] (w2) at (3,0) {};
  \fill (w2) circle [radius=2pt];
   \node[nnode, label={[label distance=-0.15cm]right:{\scriptsize $y$}}] (w22) at (3.5,0.7) {};
  \fill (w22) circle [radius=2pt];
  \node[nnode, label=below:{\scriptsize $w_3$}] (w3) at (4.5,0) {};
  \fill (w3) circle [radius=2pt];
      \node[nnode, label={[label distance=-0.15cm]right:{\scriptsize $y$}}] (w33) at (5.0,0.7) {};
  \fill (w33) circle [radius=2pt];
    \node[nnode, label=below:{\scriptsize $w_4$}] (w4) at (6,0) {};
  \fill (w4) circle [radius=2pt];
    \node[nnode, label={[label distance=-0.15cm]right:{\scriptsize $y$}}] (w44) at (6.5,0.7) {};
  \fill (w44) circle [radius=2pt];
  \draw[-] (w0) -- (w1) node [label-edge] {\scriptsize $a$};
  \draw[-] (w1) -- (w2) node [label-edge] {\scriptsize $a$};
  \draw[-] (w1) -- (w11) node [label-edge] {\scriptsize $1$};
  \draw[-] (w2) -- (w3) node [label-edge] {\scriptsize $a$};
  \draw[-] (w2) -- (w22) node [label-edge] {\scriptsize $3$};
  \draw[-] (w3) -- (w33) node [label-edge] {\scriptsize $4$}; 
  \draw[-] (w3) -- (w4) node [label-edge] {\scriptsize $a$};
  \draw[-] (w4) -- (w44) node [label-edge] {\scriptsize $6$};
\end{tikzpicture}
\end{center}

We refer to worlds~$w_1, \dots, w_4$ as the \textit{bottom worlds} of this group. 
If a bottom world 
has an~$R_i$ relation to a world that makes proposition~$y$ true, we say that it represents variable~$x_i$.

The reduction makes sure 
that in the final updated model (the model that results 
from updating the initial state with the actions 
-- which are specified by the reduction)
each possible truth assignment to the variables in $\psi$ 
will be represented by  a group of worlds. 
Between the different groups, there are no $R_a$-relations (only $R_i$-relations for ~$1 \leq i \leq m$). 
By `jumping' from one group (representing a particular truth assignment)
to another group with relation~$R_i$, the truth value
 of variable~$x_i$ 
 %in that group 
 can be set to true or false. 
 We can now 
translate a quantified Boolean formula into a corresponding formula of~$\LLL_B$ by mapping every universal quantifier~$Q_i$ to~$B_i$ and every existential quantifier~$Q_j$ to~$\hat{B}_j$.

To illustrate how this reduction works, we give an example. 
Figure~\ref{fig:thm1} shows the final updated model for a quantified Boolean formula with variables~$x_1$ and~$x_2$. In this model there are four groups of worlds: $\set{w_1,w_2,w_3}$, $\set{w_4, w_5}$, $\set{w_6, w_7}$ and $\set{w_8}$. Worlds~$w_1, \dots, w_8$  {are what} we refer to as the bottom worlds. {The gray worlds and edges can be considered a byproduct of the reduction; 
they have no particular function.}

%%%%%%%%%%%%%% figure %%%%%%%%%%%%%%%%%%%%%%%%
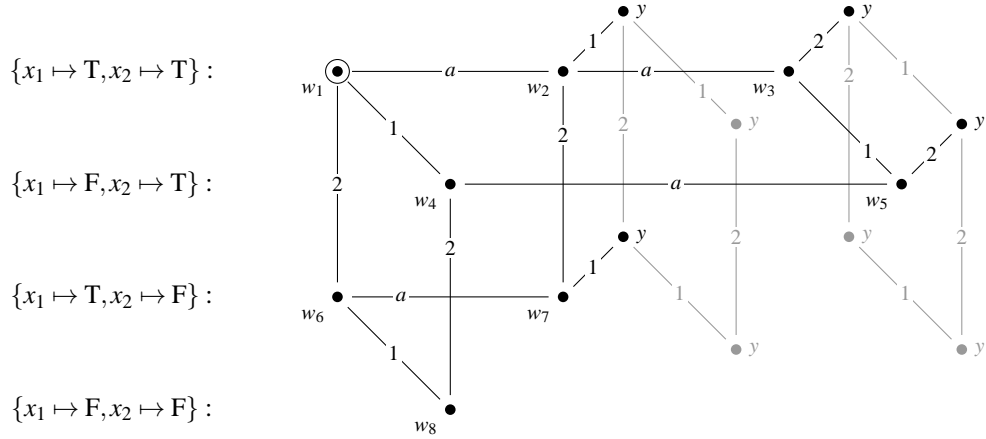
\begin{figure*}
\begin{center}
%\parbox[c]{\hsize}{
\begin{tikzpicture}[node distance=0.3cm]
  % STYLES
  \tikzstyle{dnode}=[inner sep=1pt,outer sep=1pt,draw,circle,minimum width=9pt]
  \tikzstyle{nnode}=[inner sep=1pt,outer sep=1pt,circle,minimum width=9pt]
  \tikzstyle{label-edge}=[midway,fill=white, inner sep=1pt]
  \tikzstyle{grijs}=[black!40]
  \tikzstyle{donkergrijs}=[black!40]
  % NODES
  \node[dnode, label={[label distance=-0.15cm]below left:{\scriptsize $w_1$}}] (w1) at (0,0) {};
  \fill (w1) circle [radius=2pt];
  \node[nnode, label={[label distance=-0.15cm]below left:{\scriptsize $w_2$}}] (w2) at (3,0) {};
  \fill (w2) circle [radius=2pt];
  \node[nnode, label={[label distance=-0.15cm]right:{\scriptsize $y$}}] (w2b) at (3.8,0.8) {};
  \fill (w2b) circle [radius=2pt];
  \node[nnode, label={[label distance=-0.15cm]below left:{\scriptsize $w_3$}}] (w3) at (6,0) {};
  \fill (w3) circle [radius=2pt];
  \node[nnode, label={[label distance=-0.15cm]right:{\scriptsize $y$}}] (w3b) at (6.8,0.8) {};
  \fill (w3b) circle [radius=2pt];
  \node[nnode, label={[label distance=-0.15cm]below left:{\scriptsize $w_4$}}] (w1') at (1.5,-1.5) {};
  \fill (w1') circle [radius=2pt];
  \node[nnode, label={[label distance=-0.15cm, grijs]right:{\scriptsize $y$}}] (w2b') at (5.3,-0.7) {};
  \fill[grijs] (w2b') circle [radius=2pt];
  \node[nnode, label={[label distance=-0.15cm]below left:{\scriptsize $w_5$}}] (w3') at (7.5,-1.5) {};
  \fill (w3') circle [radius=2pt];
  \node[nnode, label={[label distance=-0.15cm]right:{\scriptsize $y$}}] (w3b') at (8.3,-0.7) {};
  \fill (w3b') circle [radius=2pt];
  \node[nnode, label={[label distance=-0.15cm]below left:{\scriptsize $w_6$}}] (w1'') at (0,-3) {};
  \fill (w1'') circle [radius=2pt];
  \node[nnode, label={[label distance=-0.15cm]below left:{\scriptsize $w_7$}}] (w2'') at (3,-3) {};
  \fill (w2'') circle [radius=2pt];
  \node[nnode, label={[label distance=-0.15cm]right:{\scriptsize $y$}}] (w2b'') at (3.8,-2.2) {};
  \fill (w2b'') circle [radius=2pt];
  \node[nnode, label={[label distance=-0.15cm, grijs]right:{\scriptsize $y$}}] (w3b'') at (6.8,-2.2) {};
  \fill[grijs] (w3b'') circle [radius=2pt];
  \node[nnode, label={[label distance=-0.15cm]below left:{\scriptsize $w_8$}}] (w1''') at (1.5,-4.5) {};
  \fill (w1''') circle [radius=2pt];
  \node[nnode, label={[label distance=-0.15cm, grijs]right:{\scriptsize $y$}}] (w2b''') at (5.3,-3.7) {};
  \fill[grijs] (w2b''') circle [radius=2pt];
  \node[nnode, label={[label distance=-0.15cm, grijs]right:{\scriptsize $y$}}] (w3b''') at (8.3,-3.7) {};
  \fill[grijs] (w3b''') circle [radius=2pt];
  % EDGES
  \draw[-,grijs] (w2b'') -- (w2b''') node [label-edge] {\scriptsize $1$};
  \draw[-,grijs] (w3b'') -- (w3b''') node [label-edge] {\scriptsize $1$};
  \draw[-,grijs] (w2b) -- (w2b') node [label-edge, near end] {\scriptsize $1$};
  \draw[-,grijs] (w3b) -- (w3b') node [label-edge] {\scriptsize $1$};
  \draw[-,grijs] (w2b) -- (w2b'') node [label-edge] {\scriptsize $2$};
  \draw[-,grijs] (w2b') -- (w2b''') node [label-edge] {\scriptsize $2$};
  \draw[-,grijs] (w3b) -- (w3b'') node [label-edge, near start] {\scriptsize $2$};
  \draw[-,grijs] (w3b') -- (w3b''') node [label-edge] {\scriptsize $2$};
  \draw[-] (w1) -- (w2) node [label-edge] {\scriptsize $a$};
  \draw[-] (w2) -- (w3) node [label-edge, pos=0.35] {\scriptsize $a$};
  \draw[-] (w1') -- (w3') node [label-edge] {\scriptsize $a$};
  \draw[-] (w1'') -- (w2'') node [label-edge, near start] {\scriptsize $a$};
  \draw[-] (w1) -- (w1') node [label-edge] {\scriptsize $1$};
  \draw[-] (w1'') -- (w1''') node [label-edge] {\scriptsize $1$};
  \draw[-] (w3) -- (w3') node [label-edge, near end] {\scriptsize $1$};
  \draw[-] (w1) -- (w1'') node [label-edge] {\scriptsize $2$};
  \draw[-] (w1') -- (w1''') node [label-edge, near start] {\scriptsize $2$};
  \draw[-] (w2) -- (w2'') node [label-edge, near start] {\scriptsize $2$};
  \draw[-] (w2) -- (w2b) node [label-edge] {\scriptsize $1$};
  \draw[-] (w2'') -- (w2b'') node [label-edge] {\scriptsize $1$};
  \draw[-] (w3) -- (w3b) node [label-edge] {\scriptsize $2$};
  \draw[-] (w3') -- (w3b') node [label-edge] {\scriptsize $2$};
  % ASSIGNMENTS
  \node[] at (-3,0) {\small $\SBs x_1 \mapsto \text{T}, x_2 \mapsto \text{T} \SEs:$};
  \node[] at (-3,-1.5) {\small $\SBs x_1 \mapsto \text{F}, x_2 \mapsto \text{T} \SEs:$};
  \node[] at (-3,-3) {\small $\SBs x_1 \mapsto \text{T}, x_2 \mapsto \text{F} \SEs:$};
  \node[] at (-3,-4.5) {\small $\SBs x_1 \mapsto \text{F}, x_2 \mapsto \text{F} \SEs:$};
\end{tikzpicture}
\end{center}
\caption{Example for the reduction in the proof of Theorem 1; a final updated model for a quantified Boolean formula with variables~$x_1$ and~$x_2$.}
\label{fig:thm1}
\end{figure*}
%%%%%%%%% END FIGURE %%%%%%%%%%%%%%%%%%%%%%%%%%%%%%%%

We represent variable~$x_1$ by~$\hat{B}_1 y$ and variable~$x_2$ by~$\hat{B}_2 y$. Then, in the model above, checking whether $\exists x_1 \forall x_2. x_1 \vee x_2$ is true can be done by checking whether \allowbreak formula~$\hat{B}_1 B_2 ( \hat{B}_a \hat{B}_1 y \vee \hat{B}a \hat{B}_2 y)$ is true, which is indeed the case.
Also, checking whether $\forall x_1 \forall x_2. x_1 \vee x_2$ is true can be done by checking whether~$B_1 B_2 (\hat{B}_a \hat{B}_1 y \vee \hat{B}a \hat{B}_2 y)$ is true, which is not the case.

 \sloppypar 
Now, we continue with the formal details.
Let $\phi = Q_1 x_1 \dots Q_m x_m. \psi$ 
be a quantified Boolean formula with quantifiers~$Q_1,\dots,Q_m$ and \textit{var}$(\psi) = \set{x_1, \dots , x_m}$.
 We define the following polynomial-time computable mappings. For~$1 \le i \le m$, let $[x_i] =  \hat{B}_i y$, and  
\[ [Q_i] = 
\begin{dcases*}
B_i & if $Q_i = \forall$ \\
\hat{B}_i & if $Q_i = \exists $. \\
\end{dcases*}
\] 

  Formula~$[\psi]$ is the adaptation of formula $\psi$  
 where 
 every occurrence of~$x_i$ in~$\psi$ is replaced by~$\hat{B}_a[x_i]$. Then~$[\phi] = [Q_1] \dots [Q_m] [\psi]$.
We formally specify the reduction~$R$.
 We let $R(\phi) = (P, \A, s_0, a_1, \dots, a_{m},  [\phi])$, where:

\begin{itemize}
\item $P = \set{y}$,
$\A = \set{a, 1, \dots, m}$

% ========== S_0 ======================================================
\item 
\begin{tabular}{r l}
$s_0 = $
&
\hspace{-6pt}
\parbox[c]{\hsize}{
%\\
\begin{tikzpicture}[node distance=0.3cm]
  % STYLES
  \tikzstyle{dnode}=[inner sep=1pt,outer sep=1pt,draw,circle,minimum width=9pt]
  \tikzstyle{nnode}=[inner sep=1pt,outer sep=1pt,circle,minimum width=9pt]
  \tikzstyle{label-edge}=[midway,fill=white, inner sep=1pt]
  % DISTINGUISHED NODES
  \node[dnode, label=below:{\scriptsize }] (w0) at (0,0) {};
  \fill (w0) circle [radius=2pt];
  \node[nnode, label=below:{\scriptsize }] (w1) at (1.5,0) {};
  \fill (w1) circle [radius=2pt];
   \node[nnode, label={[label distance=-0.15cm]right:{\scriptsize $y$}}] (w11) at (2.1,0.7) {};
  \fill (w11) circle [radius=2pt];
  \node[nnode, label=below:{\scriptsize }] (w2) at (3,0) {};
  \fill (w2) circle [radius=2pt];
   \node[nnode, label={[label distance=-0.15cm]right:{\scriptsize $y$}}] (w22) at (3.6,0.7) {};
  \fill (w22) circle [radius=2pt];
  \node[] (w3) at (4.2,0) {$\cdots$};
%  \fill (w3) circle [radius=2pt];
    \node[nnode, label=below:{\scriptsize }] (w4) at (5.4,0) {};
  \fill (w4) circle [radius=2pt];
    \node[nnode, label={[label distance=-0.15cm]right:{\scriptsize $y$}}] (w44) at (6,0.7) {};
  \fill (w44) circle [radius=2pt];
  \draw[-] (w0) -- (w1) node [label-edge] {\scriptsize $a$};
  \draw[-] (w1) -- (w2) node [label-edge] {\scriptsize $a$};
  \draw[-] (w1) -- (w11) node [label-edge] {\scriptsize $1$};
  \draw[-] (w2) -- (w3) node [label-edge] {\scriptsize $a$};
  \draw[-] (w2) -- (w22) node [label-edge] {\scriptsize $2$};
  \draw[-] (w3) -- (w4) node [label-edge] {\scriptsize $a$};
  \draw[-] (w4) -- (w44) node [label-edge] {\scriptsize $m$};
\end{tikzpicture}
} \\
\end{tabular}
\end{itemize}

All relations in~$s_0, a_1,\dots,a_m$ are equivalence relations. Note that all worlds in~$s_0, a_1,\dots,a_m$ have reflexive loops for 
all agents. We omit all reflexive loops for the sake of readability.

% ========== ACTION 1 ======================================================
\begin{itemize}
\item

\begin{tabular}{r l}
$a_1 = $
&
\parbox[c]{\hsize}{
\begin{tikzpicture}[node distance=1cm]
  % STYLES
  \tikzstyle{dnode}=[inner sep=1pt,outer sep=1pt,draw,circle,minimum width=9pt]
  \tikzstyle{nnode}=[inner sep=1pt,outer sep=1pt,circle,minimum width=9pt]
  \tikzstyle{label-edge}=[midway,fill=white, inner sep=1pt]
  % DISTINGUISHED NODES
  \node[dnode,label=south:{\scriptsize $e_1 : \langle \top, \top  \rangle$}] (w1) at (0,0) {};
  \fill (w1) circle [radius=2pt];
  \node[nnode,label=south:{\scriptsize $e_2 : \langle \neg \hat{B}_1 y \vee y, \top \rangle$}] (w2) at (2,0) {};
  \fill (w2) circle [radius=2pt];
  \draw[-] (w1) -- (w2) node [label-edge] {\scriptsize $1$};
\end{tikzpicture}
} \\
\end{tabular}

\indent \vdots

% ========== ACTION m ======================================================
\item

\begin{tabular}{r l}
$a_m = $
\parbox[c]{\hsize}{
\begin{tikzpicture}[node distance=1cm]
  % STYLES
  \tikzstyle{dnode}=[inner sep=1pt,outer sep=1pt,draw,circle,minimum width=9pt]
  \tikzstyle{nnode}=[inner sep=1pt,outer sep=1pt,circle,minimum width=9pt]
  \tikzstyle{label-edge}=[midway,fill=white, inner sep=1pt]
  % DISTINGUISHED NODES
  \node[dnode,label=south:{\scriptsize $e_1 : \langle \top, \top \rangle$}] (w1) at (0,0) {};
  \fill (w1) circle [radius=2pt];
  \node[nnode,label=south:{\scriptsize $e_2 : \langle \neg \hat{B}_m y \vee y , \top \rangle$}] (w2) at (2,0) {};
  \fill (w2) circle [radius=2pt];
  \draw[-] (w1) -- (w2) node [label-edge] {\scriptsize $m$};
\end{tikzpicture}
} \\
\end{tabular}
\end{itemize}

We show that~$ \phi \in $ \tqbf{} if and only if~$R(\phi) \in$ \gdbu{}.
We prove that for all~$1 \le i \le m+1$ the following claim holds. For any assignment~$\alpha$ to the variables~$x_1,\dots,x_{i-1}$ and any bottom world~$w$ of a group that agrees with~$\alpha$, 
the formula $Q_i x_i \dots Q_m x_m. \psi$ is true under~$\alpha$ if and only if $[Q_i]\dots [Q_m] [\psi]$ is true in world~$w$. In the case for $i=m+1$, this refers to the formula~$[\psi]$.

We start with the case for~$i = m+1$. We show that the claim holds. Let~$\alpha$ be  any assignment to the variables~$x_1, \dots,x_m$, and let~$w$ be any bottom world of a group~$\gamma$ that represents~$\alpha$. Then, by construction of~$[\psi]$, we know that~$\psi$ is true under~$\alpha$ if and only if~$[\psi]$ is true in~$w$. 

Assume that the claim holds for~$i = j + 1$. We show that then the claim also holds for~$i = j$. Let~$\alpha$ be  any assignment to the variables~$x_1, \dots, x_{j-1}$ and let~$w$ be a bottom world of a group that agrees with~$\alpha$. We show that the formula~$Q_j \dots Q_m.\psi$ is true under~$\alpha$ if and only if~$[Q_j] \dots [Q_m][\psi]$ is true in~$w$. 

First, assume that~$Q_j \dots Q_m.\psi$ is true under~$\alpha$.
Consider the case where~$Q_j = \forall$. Then for both assignments~$\alpha ' \supseteq \alpha$ to the variables~$x_1, \dots, x_j$, formula~$Q_{j+1} \dots Q_m.\psi$ is true under~$\alpha '$. Now, by assumption, we know that for any bottom world~$w'$ of a group that agrees with~$\alpha$ --
 so in particular for all bottom worlds~$w'$ that are $R_j$-reachable from~$w$ -- formula $[Q_{j+1}] \dots [Q_m][\psi]$ is true in~$w'$. Since~$[Q_{j}] = B_j$, this means that~$[Q_{j}] \dots [Q_m][\psi]$ is true in~$w$.
The case where~$Q_j = \exists$ is analogous.

Next, assume that~$Q_j \dots Q_m.\psi$ is not true under~$\alpha$.
Consider the case where~$Q_j = \forall$. Then there is some 
assignment~$\alpha ' \supseteq \alpha$ to the variables~$x_1, \dots, x_j$, such 
that~$Q_{j+1} \dots Q_m.\psi$ is not true under~$\alpha '$.
Now, by assumption, we know that for any bottom world~$w'$ 
of a group that agrees with~$\alpha$ -- so in particular 
for some bottom world~$w'$ that is  $R_j$-reachable 
from~$w$ -- formula~$[Q_{j+1}] \dots [Q_m][\psi]$ is not true in~$w'$. 
Since~$[Q_{j}] = B_j$, this means that~$[Q_{j}] \dots [Q_m][\psi]$ 
is not true in~$w$.
The case where~$Q_j = \exists$ is analogous.

Hence, the claim holds for the case that~$i = j$. Now, by induction, the claim holds for the case that~$i = 1$, and hence it follows that~$ \phi \in $ \tqbf{} if and only if~$R(\phi) \in$ \gdbu{}.
Since this reduction runs in polynomial time, we can conclude that \gdbu{} is \pspace{}-hard.
\end{proof}

\begin{theorem}
\label{thm5}
\gdbu{} is \pspace{}-complete.
\end{theorem}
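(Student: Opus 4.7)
Since Theorem~\ref{thm4} already establishes \pspace{}-hardness, it remains to place \gdbu{} in \pspace{}. The plan is to avoid ever materialising the product model $s_0 \otimes a_1 \otimes \cdots \otimes a_k$, whose world set can have up to $|W_0| \cdot \prod_{j \le k} |E_j|$ elements and so blows up exponentially in~$k$. Each world of the final product is nevertheless described by its coordinate tuple $(w, e_1, \ldots, e_k)$, of polynomial size, and I would decide truth at such tuples by a recursive routine whose stack frames and total depth are both polynomial.

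Concretely, I would define a procedure $\textsc{Check}(j, (w, e_1, \ldots, e_j), \psi)$ that returns true exactly when the tuple is a world of $s_0 \otimes a_1 \otimes \cdots \otimes a_j$ and $\psi$ is satisfied there. It branches on the shape of $\psi$: for atomic $p$, scan $\post(e_j), \post(e_{j-1}), \ldots$ right-to-left for the most recent postcondition entailing $p$ or $\neg p$, falling back to $V(w, p)$; for $\neg \psi_1$ and $\psi_1 \wedge \psi_2$, recurse in the obvious way; for $B_a \psi_1$, iterate through every candidate $(v, f_1, \ldots, f_j)$ with $w R_a v$ and $e_i Q_a^i f_i$ for all $i \le j$, recursively verify that each $\pre(f_i)$ holds at $(v, f_1, \ldots, f_{i-1})$ in the $(i{-}1)$-st product (confirming the candidate lies in the updated world set), and then recurse on~$\psi_1$. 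The top-level answer is the conjunction of $\textsc{Check}(k, \cdot, \phi)$ over all designated tuples of the final product.

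Each stack frame stores only a coordinate tuple (polynomial in the input size), a subformula drawn from the finite pool consisting of $\phi$ together with every precondition occurring in $a_1, \ldots, a_k$, and a loop counter for the $B_a$-enumeration, so each frame is polynomial. The main step I expect to require care is the depth bound: naively one might worry that, because preconditions themselves contain modal operators, every modal layer of $\phi$ spawns a fresh recursion of depth $k$, compounding multiplicatively. The saving observation is that each entry into a precondition strictly lowers the stage index $j$, while every other recursive step strictly reduces the current formula within the finite pool of input subformulas; hence the recursion descends lexicographically on $(j, \text{modal depth of current formula})$, with both components polynomially bounded. This yields polynomial depth and hence polynomial overall space (the branching time may well be exponential, but only space matters here). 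Combined with Theorem~\ref{thm4}, \pspace{}-completeness follows.
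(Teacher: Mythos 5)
Your proposal is correct and matches the paper's approach: the paper establishes membership by adapting Aucher and Schwarzentruber's polynomial-space recursive model-checking algorithm --- which represents worlds of the iterated product as coordinate tuples and verifies preconditions on the fly --- adding exactly the two modifications you describe (a postcondition scan in the atomic case and iteration over the designated worlds), with hardness supplied by Theorem~\ref{thm4}. You simply spell the algorithm out explicitly, including the key lexicographic depth bound, rather than citing it.
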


\begin{proof}
In order to show PSPACE-membership for the problem DBU, we can modify the polynomial-space algorithm given by Aucher and Schwarzentruber \cite{aucher2013}. Their algorithm works for the problem of checking whether a given (single-pointed) epistemic model makes a given DEL-formula true, where the formula contains event models that can be multi-pointed, but that have no postconditions. In order to make the algorithm work for multi-pointed epistemic models, we can simply call the algorithm several times, once for each of the designated worlds. Also, a modification is needed to deal with postconditions. The algorithm checks the truth of a formula by inductively calling itself for subformulas. In order to deal with postconditions, only the case where the formula is a propositional variable needs to be modified. This modification is rather straightforward. For more details, we refer to~\cite{vandepol2015}.
\end{proof}

\subsection{Parameterized Complexity Results}
\label{sectionparametercomplex}

Next, we provide a parameterized complexity analysis of \gdbu{}.

\subsubsection{Parameters for \gdbu{}}

 We consider the following parameters for \gdbu{}.
For each subset~$\kappa \subseteq \{ a,c,e,f,o,p,u \}$ we consider
the parameterized variant~$\kappa \mtext{-\gdbu}$ of \gdbu{},
where the parameter is the sum of the values for the elements
of~$\kappa$ as specified in Table~\ref{table:parameters}.
For instance, the problem~\agdbu{} is
parameterized by the number of agents.
Even though technically speaking there is only one parameter,
we will refer to each of the elements of~$\kappa$ as parameters.

For the modal depth of a formula we count the maximum number of
nested occurrences of operators~$B_a$.
Formally, we define the modal depth~$d(\varphi)$ of a formula~$\varphi$
(in $\LLL_B$) recursively as follows.
\[ d(\varphi) = \begin{dcases*}
0& if $\varphi = p\in P$ is a proposition;\\
\max \SBs d(\varphi_1), d(\varphi_2) \SEs &if $\varphi = \varphi_1 \wedge \varphi_2$;\\
%1 + s(\varphi_1) + s(\varphi_2) &if $\varphi = \varphi_1 \vee \varphi_2$;\\
d(\varphi_1) &if $\varphi = \neg \varphi_1$;\\
1 + d(\varphi_1) &if $\varphi = B_a \varphi_1$.\\
\end{dcases*} \]
%Formally, we define the modal depth~$d(\varphi)$ of a formula~$\varphi$
%(in $\LLL_B$) recursively as follows.
%If $\varphi = p\in P$ is a proposition, we let~$d(\varphi) = 0$;
%if $\varphi = \varphi_1 \wedge \varphi_2$, we let~$d(\varphi) =
%\max \SBs d(\varphi_1), d(\varphi_2) \SEs$;
%if $\varphi = \neg \varphi_1$, we let~$d(\varphi) = d(\varphi_1)$;
%and if $\varphi = B_a \varphi_1$, we let~$d(\varphi) = 1 + d(\varphi_1)$.

For the size of a formula we count the number of occurrences of
propositions and logical connectives.
Formally, we define the size~$s(\varphi)$ of a formula~$\varphi$
(in $\LLL_B$) recursively as follows.
\[ s(\varphi) = \begin{dcases*}
1&if $\varphi = p \in P$ is a proposition;\\
1 + s(\varphi_1) + s(\varphi_2) &if $\varphi = \varphi_1 \wedge \varphi_2$;\\
%1 + s(\varphi_1) + s(\varphi_2) &if $\varphi = \varphi_1 \vee \varphi_2$;\\
1 + s(\varphi_1) &if $\varphi = \neg \varphi_1$;\\
1 + s(\varphi_1) &if $\varphi = B_a \varphi_1$.\\
\end{dcases*} \]
%Formally, we define the size~$s(\varphi)$ of a formula~$\varphi$
%(in $\LLL_B$) recursively as follows.
%If $\varphi = p\in P$ is a proposition, we let~$s(\varphi) = 1$;
%if $\varphi = \varphi_1 \wedge \varphi_2$, we let~$s(\varphi) =
%1 + s(\varphi_1) + s(\varphi_2)$;
%if $\varphi = \neg \varphi_1$, we let~$s(\varphi) = 1 + s(\varphi_1)$;
%and if $\varphi = B_a \varphi_1$, we let~$s(\varphi) = 1 + s(\varphi_1)$.

\begin{table}[tb]
  \centering
  \begin{tabular}{@{}c@{\quad}l@{}}
    \textbf{Param.} & \textbf{Description} \\
    \toprule
    $a$ & number of agents \\
    \midrule
    $c$ & maximum size of the preconditions \\
    \midrule
    $e$ & maximum number of events in the event models \\
    \midrule
    $f$ & size of the formula \\
    \midrule
    $o$ & modal depth of the formula, \\
    & i.e., the order parameter \\
    \midrule
    $p$ & number of propositions in $P$ \\
    \midrule
    $u$ & number of actions, i.e., the number of updates \\
    \bottomrule
  \end{tabular}
  \caption{Overview of the different parameters for \gdbu{}.}
  \label{table:parameters}
\end{table}

\subsubsection{Intractability Results}

In the following, we 
show fixed-parameter intractability for
 several parameterized versions of \gdbu{}.
We will mainly use the parameterized complexity classes W[1]
and para-NP to show intractability, i.e., we will show hardness
for these classes.
Note that we could additionally use the class para-PSPACE \cite{flum2003} to give
stronger intractability results.
For instance, the proof of Theorem~\ref{thm4} already shows
that \pgdbu{} is para-PSPACE hard, since the reduction in this proof
uses  a constant number of propositions.
However, since in this paper we are mainly interested in the
border between fixed-parameter tractability and intractability,
we will not focus on the subtle differences in the degree of
intractability, and restrict ourselves to showing W[1]-hardness
and para-NP-hardness.
This is also the reason why we will not show membership for
any of the (parameterized) intractability classes; showing hardness  suffices to
indicate intractability.
For the following proofs we use the well-known satisfiability problem \sat{}
for propositional formulas. The problem \sat{} is \np{}-complete \cite{cook1971,levin1973}. Moreover, hardness for \sat{} holds even when restricted to propositional formulas that are in 3\CNF.

\begin{proposition}
\label{proposition6}
\acefogdbu{} is \paranp{}-hard.
\end{proposition}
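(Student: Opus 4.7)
The plan is to prove para-NP-hardness by reducing \sat{} for 3-CNF formulas to \gdbu{} in such a way that the five parameters $a, c, e, f, o$ (number of agents, precondition size, number of events, formula size, modal depth) are simultaneously bounded by absolute constants; only the numbers of propositions $p$ and of actions $u$ are allowed to grow with the input. Given $\psi = C_1 \wedge \dotsb \wedge C_m$ on variables $x_1,\dotsc,x_n$, I will use three agents $\A = \set{c, \ell, s}$ and propositions $\set{y, q} \cup \set{p_i^+, p_i^- : 1 \le i \le n}$.

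The initial model $s_0$ bakes the syntactic structure of $\psi$ into its base. The designated world $w_0$ has no propositions true; for each clause $C_j$ a world $w_j$ (with only $q$ true) is $R_c$-linked to $w_0$; for each literal $\lambda_{j,k}$ of $C_j$ a world $u_{j,k}$ is $R_\ell$-linked to $w_j$ and carries $p_i^+$ or $p_i^-$ according to the polarity of $\lambda_{j,k}$ on $x_i$. The relation $R_s$ consists only of reflexive loops, so in particular $w_0\,R_s\,w_0$ but no non-reflexive $R_s$-edges. For each $x_i$ the action $a_i$ has a six-event event model, partitioned by the three mutually exclusive preconditions $p_i^+$, $p_i^-$, $\neg p_i^+ \wedge \neg p_i^-$ into three pairs, each pair consisting of a \emph{true-assigning} event $e_T^{\bullet}$ and a \emph{false-assigning} event $e_F^{\bullet}$. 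Postconditions set $y$ exactly when the current world is a literal-world and the chosen $T$/$F$ value satisfies its literal: $\mtext{post}(e_T^+) = y$, $\mtext{post}(e_F^+) = \neg y$, $\mtext{post}(e_T^-) = \neg y$, $\mtext{post}(e_F^-) = y$, and $\mtext{post}(e_T^0) = \mtext{post}(e_F^0) = \top$. The event relations $Q_c$ and $Q_\ell$ are the equivalence that identifies all $T$-events together and all $F$-events together, while $Q_s$ is universal; the single designated event in each action is $e_T^0$.

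The query is $\phi = \hat B_s\, B_c\, (q \to \hat B_\ell\, y)$, which has modal depth $3$ and constant size. Evaluated at the unique designated final world $(w_0, e_T^0,\dotsc, e_T^0)$, $\hat B_s$ chooses an assignment $\beta$ among the $2^n$ $w_0$-copies, $B_c$ ranges over the resulting clause-copies $(w_j, \vec e_\beta)$, the guard $q$ filters exactly those, and $\hat B_\ell y$ witnesses a literal-copy $(u_{j,k}, \vec e_\beta')$ with $y$ true --- which by the postcondition design exists iff some literal of $C_j$ is satisfied by $\beta$. Hence $\phi$ holds at the final state iff $\psi \in \sat{}$. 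All five bounded parameters are constant: $a = 3$, $e = 6$, each precondition has size at most $5$, the formula size is at most $14$, and the modal depth is $3$. The reduction is clearly polynomial-time computable.

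The main technical obstacle is verifying that, across iterated product update, the interaction between $Q_c$/$Q_\ell$ and the three precondition signatures of the base worlds $w_0, w_j, u_{j,k}$ propagates the assignment correctly, i.e.\ that the $R'_c$- and $R'_\ell$-reachable worlds from any $(v, \vec e_\beta)$ are exactly those sharing the assignment $\beta$, even though $w_0$ and $w_j$ only activate the $e_{\cdot}^0$ events while literal-worlds activate the $e_{\cdot}^+$ or $e_{\cdot}^-$ events; this requires a short case analysis over base-world type and the class of event applicable at each stage. The restriction of $R_s$ to reflexive loops is what guarantees that $\hat B_s$ only reaches $w_0$-copies, so that $s$ genuinely quantifies over assignments and nothing else, and once these two points are checked the correctness of the reduction follows immediately.
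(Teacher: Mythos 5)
Your reduction is correct, and it establishes the proposition by a genuinely different construction than the paper's. The paper (with details deferred to the thesis) keeps the three agents and the constant-size query but encodes the truth assignment \emph{in the valuations of the worlds}: it uses one proposition per variable plus an auxiliary $x_{m+1}$, runs $m$ actions that each split every world into an ``$x_i$ true'' and an ``$x_i$ false'' copy so that the final model enumerates all assignments, and then runs $l$ further clause-checking actions (with constant-size preconditions, which is where the 3CNF assumption is used) that sequentially record in $x_{m+1}$ whether clauses $c_1,\dots,c_j$ are satisfied; the query then just asks about $x_{m+1}$. You instead hard-code the clause/literal structure of $\psi$ into the initial model and encode the assignment in the \emph{event history}, using the partition of each event model into a $T$-class and an $F$-class together with the agent relations $Q_c$, $Q_\ell$ to propagate the chosen signature, and you verify satisfaction in one shot with a fixed depth-$3$ formula $\hat B_s B_c(q \rightarrow \hat B_\ell y)$ rather than via a sequential accumulation phase. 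Your version needs only $n$ actions and no clause-checking gadget, at the price of the invariant you correctly flag as the main obstacle --- that from any copy $(v,\vec e_\beta)$ the $R'_c$- and $R'_\ell$-successors are exactly the copies carrying the same signature $\beta$; this does go through, since the mutually exclusive preconditions $p_i^+$, $p_i^-$, $\neg p_i^+\wedge\neg p_i^-$ force exactly one applicable event per $T/F$-class at every base world and every stage, and the postconditions touch only $y$. The paper's ``enumerate in the valuation, then filter'' construction is more pedestrian but its correctness is immediate clause by clause; yours is tighter and illustrates that the hardness already arises from the interaction of the update product with the accessibility relations rather than from propositional bookkeeping. Both keep $a$, $c$, $e$, $f$, $o$ constant, so both suffice for \paranp{}-hardness.
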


\begin{proof}
To show \paranp{}-hardness, we specify a polynomial-time 
reduction $R$ from {\sat{}} 
 to \gdbu{}, where parameters~$a$, $c$, $e$, $f$, and~$o$ 
have constant values. 
Let $\phi$ be a propositional formula with 
\textit{var}$(\phi) = \set{x_1, \dots , x_m}$. 
Without loss of generality we assume that~$\phi$ is a 
3\CNF{}  formula with clauses $c_1$ to $c_l$. 
   
The general idea behind this reduction is that we use the worlds
in the final updated model (that results from updating the initial state with the 
actions -- which are specified by the reduction) to {list} 
all possible assignments to  \textit{var}$(\phi)$, by setting 
the propositions (corresponding to the variables in 
\textit{var}$(\phi)$) to true and false accordingly. Then checking 
whether formula~$\phi$ is satisfiable can be done by checking 
whether~$\phi$ is true in any of the worlds. 
Actions~$a_1$ to~$a_m$ are used to 
create a corresponding world for each possible assignment 
to  \textit{var}$(\phi)$. Furthermore, 
to keep the formula that we check in the 
final updated model of constant size, 
we sequentially check the truth of each clause~$c_i$ and 
encode whether the clauses are true with an
additional  variable~$x_{m+1}$. 
This is done by actions~$a_{m+1}$ to~$a_{m+l}$. In the final updated 
model, variable~$x_{m+1}$ will only be true in a world, 
if it makes clauses~$c_1$ to~$c_l$ true, i.e., 
if it makes formula~$\phi$ true.

For more details, we refer to~\cite{vandepol2015}.
\end{proof}

\begin{proposition}
\label{prop11}
\cefopgdbu{} is \paranp{}-hard.
\end{proposition}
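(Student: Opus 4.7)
The plan is to give a polynomial-time reduction from 3-\sat{} to \gdbu{} in which all five parameters $c,e,f,o,p$ are bounded by constants, combining the assignment-by-agents encoding of Theorem~\ref{thm4} with the sequential clause-checking idea sketched for Proposition~\ref{proposition6}. Given a 3\CNF{} formula $\phi$ with variables $x_1,\dots,x_m$ and clauses $c_1,\dots,c_l$, I would take agents $\A = \{a,b,1,\dots,m\}$ and only two propositions $P = \{y,z\}$, so that $p = 2$ is independent of $|\phi|$.

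First I would reuse (a slight variant of) the construction of Theorem~\ref{thm4} to define an initial state $s_0$ and $m$ generating actions $a_1,\dots,a_m$, each of constant size in events and preconditions, whose successive product updates yield a model with one group of worlds per truth assignment $\alpha$ on $\{x_1,\dots,x_m\}$, in which $\hat{B}_a \hat{B}_i y$ evaluates, at every world of the group, to the truth value of $x_i$ under $\alpha$. The proposition $z$ is made true everywhere in $s_0$, recording ``no clause has failed so far.'' To keep the modal depth of the final formula constant, I would moreover arrange, using the dedicated agent $b$, that after $a_1,\dots,a_m$ the designated world has an $R_b$-edge to at least one witness world in every group; concretely, one can start with suitable $R_b$-edges in $s_0$ and make the events of each $a_i$ fully $Q_b$-equivalent, so that $R_b$-edges duplicate along with the worlds.

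Next I would add, for each clause $c_j = \ell_{j,1} \vee \ell_{j,2} \vee \ell_{j,3}$, a clause-checking action $a_{m+j}$ with exactly two events $e_1, e_2$ that are $Q$-equivalent under every agent, so that all accessibility relations are preserved in the product update. Writing $[x_i] := \hat{B}_a \hat{B}_i y$ and $[\neg x_i] := \neg \hat{B}_a \hat{B}_i y$, event $e_1$ has precondition $[\ell_{j,1}] \vee [\ell_{j,2}] \vee [\ell_{j,3}]$ and trivial postcondition, while $e_2$ has the negated precondition and postcondition $\neg z$. These two preconditions are mutually exclusive and exhaustive, so each existing world has a unique successor in the new model and $z$ is flipped to false in exactly those groups whose assignment does not satisfy $c_j$. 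Since each $[\ell]$ has constant size and modal depth, each such action has $c = O(1)$ and $e = 2$.

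Finally, the formula of interest is $\hat{B}_b z$, which has $f = O(1)$ and $o = 1$ and is true at the designated world of the final updated model iff some assignment satisfies all of $c_1,\dots,c_l$, i.e., iff $\phi \in \sat{}$. The reduction runs in polynomial time and keeps $p = 2$ while bounding $c, e, f, o$ by absolute constants, which yields \paranp{}-hardness. I expect the main obstacle to be arranging the initial state and the generating actions so that the group structure of Theorem~\ref{thm4} survives the clause-checking actions intact and, simultaneously, the auxiliary agent $b$ reaches a witness of every assignment from the designated world in a single modal step; this amounts to a careful but routine bookkeeping of $Q_b$-edges and reflexive loops in every event model, with the correctness verification following the same inductive pattern as in the proofs of Theorem~\ref{thm4} and Proposition~\ref{proposition6}.
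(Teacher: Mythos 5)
Your proposal is correct and follows essentially the same strategy as the paper's proof: reuse the group-per-assignment construction from Theorem~\ref{thm4}, check the clauses sequentially with $l$ additional small actions, record clause satisfaction in a single extra proposition $z$, and query the result with a constant-size, modal-depth-one formula via a fresh agent, so that only the number of agents grows with the input. The paper's version differs only in inessential encoding details (it marks the groups that satisfy clauses $1$ to $j$ via an $R_c$-accessible world where $z$ is true and uses a single relation $R_b$ between groups, whereas you falsify $z$ directly at a witness world in each failing group), and it likewise defers the remaining bookkeeping to the companion thesis.
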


\begin{proof}
To show para-NP-hardness, we specify a polynomial-time 
reduction $R$ from {\sat{}} 
 to \gdbu{}, where parameters~$c$, $e$, $f$, $o$, and~$p$ 
have constant values.
Let $\phi$ be a propositional formula with 
\textit{var}$(\phi) = \set{x_1, \dots , x_m}$.
The general  idea behind this reduction 
is similar to the reduction in the proof of Theorem~\ref{thm4}.
Again we  use groups of worlds to represent particular assignments
to the variables in ~$\phi$. Here, there is only relation~$R_b$
between the different groups. 
Furthermore, 
to keep the formula that we check in the 
final updated model of constant size, 
we sequentially check the truth of each clause~$c_i$ and 
encode whether the clauses are true with an
additional variable~$z$. 
This is done by actions~$a_{m+1}$ to~$a_{m+l}$. 
Action~$a_{m+j}$ (corresponding to clause~$j$) marks each group of worlds 
(which represents a particular assignment to the variables in~$\phi$)  that `satisfies' clauses~$1$ to~$j$. (This marking happens by means of an~$R_c$-accessible world where~$z$ is true.) Then, in the final updated model, there will only be such a marked group if all clauses, and hence the whole formula, is satisfiable.
 
For more details, we refer to~\cite{vandepol2015}.
\end{proof}

\begin{proposition}
\label{prop12}
\aefopgdbu{} is \paranp{}-hard.
\end{proposition}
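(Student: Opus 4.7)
My plan is to prove \paranp{}-hardness by a polynomial-time reduction from \sat{} (on 3CNF) to \gdbu{} in which each of $a$, $e$, $f$, $o$, and $p$ is simultaneously bounded by a constant. Given $\phi$ with variables $x_1,\dots,x_m$ and clauses $c_1,\dots,c_l$, the output instance will use a constant agent set $\A = \{a,b\}$, a constant proposition set $P = \{y,z\}$, a small initial pointed epistemic model $s_0$, a sequence of $m+l$ actions with constant-size event models, and a final formula of constant size and modal depth. Compared with Proposition~\ref{prop11}, which also keeps $c$, $e$, $f$, $o$ and $p$ constant but exploits one agent per variable, here the shortage of agents must be compensated for by the fact that $c$ (precondition size) is no longer a parameter and may grow with~$m$.

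The first $m$ actions $a_1,\dots,a_m$ will progressively build an ``assignment tree'': after $a_i$, each existing assignment world is to be duplicated into a $T$-copy and an $F$-copy for $x_i$, with the choice recorded by the accessibility structure rather than by a new proposition or dedicated agent relation. After all $m$ variable-actions, the updated model will contain $2^m$ assignment worlds, each uniquely identifiable from the shape of the $R_a$- and $R_b$-paths descending from it. The remaining $l$ actions $b_1,\dots,b_l$ will handle the clauses: $b_j$ has an event whose precondition is a large modal formula true exactly at those assignment worlds where clause $c_j$ holds, so that worlds failing this precondition disappear from the updated model. Because $c$ is unbounded, each such precondition may have size $\Theta(m)$. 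The final query formula, of constant modal depth (e.g.\ a short nesting of $\hat{B}_a$-modalities pointing to any surviving assignment world), then holds iff $\phi$ is satisfiable.

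The hard part will be designing each constant-size event model for $a_i$ so that it splits every current assignment world uniformly on~$x_i$ while preserving the accessibility ``signature'' that records the truth values of $x_1,\dots,x_{i-1}$. With only constantly many propositions and agents, the natural encodings used in Propositions~\ref{proposition6} and~\ref{prop11} (one proposition, respectively one agent, per variable) are unavailable, and the trick is to push the per-variable bookkeeping into the preconditions of the later actions, exploiting the unbounded~$c$. Once this structural encoding is in place, correctness is immediate: $\phi$ is satisfiable iff some assignment world survives all $l$ clause-checks iff the final formula is true in the updated model, and the boundedness of $a$, $e$, $f$, $o$, $p$ follows by inspection of the construction. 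Full technical details would follow the scheme of~\cite{vandepol2015}.
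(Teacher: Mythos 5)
Your plan follows essentially the same route as the paper's proof: a polynomial-time reduction from \sat{} in which the per-variable agents/propositions of Propositions~\ref{proposition6} and~\ref{prop11} are replaced by a structural encoding of assignments as strings of worlds linked by alternating $R_a$/$R_b$-relations, the now-unbounded precondition size~$c$ absorbs the per-variable bookkeeping, and a constant-size, constant-depth query formula is recovered via an extra marking step. The only divergences are cosmetic and at the same level of detail as the paper's own sketch: the paper marks satisfying assignment-groups with a fresh proposition reachable via an additional relation (rather than deleting failing worlds, which in your variant would additionally require keeping a surviving designated world so that the action sequence remains applicable even when $\phi$ is unsatisfiable), and it concentrates the satisfiability check into a single extra action rather than one action per clause.
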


\begin{proof}
To show para-NP-hardness, we specify a polynomial-time reduction $R$ from
\sat{} to \gdbu, where 
parameters~$a$, $e$, $f$, $o$ and~$p$  
have constant values. 
Let $\phi$ be a propositional formula with~\textit{var}$(\phi) = \set{x_1, \dots , x_m}$. 
The reduction is based on the same principle as the one used in the  
proof of Proposition~\ref{prop11}. To keep the number of agents constant, we 
use a different construction to represent 
the variables in~\textit{var}$(\phi)$.
We encode the variables by a string of worlds that are 
connected by alternating relations~$R_a$ and~$R_b$.

 Furthermore, we keep the size of the formula 
 (and consequently the modal depth of the formula) constant by encoding
 the satisfiability  of the formula with a single proposition.
 We do this   
 by adding an extra action~$a_{m+1}$. 
 Action~$a_{m+1}$ makes sure that each group of worlds that represents a satisfying assignment for
 the given formula, will have an~$R_c$ relation from a world that 
 is~$R_b$-reachable from the designated world to a world where
 proposition~$z^*$ is true.  

For more details, we refer to~\cite{vandepol2015}.
\end{proof}

We consider the following parameterized problem, that we will use in our proof of Proposition~\ref{prop9}.
{This problem is \W{1}-complete \cite{fellows2009}.}

\probdefp{\mclique{}}{A graph $G$, and a vertex-coloring $c: V(G) \ra \set{1,2,\dots,k}$ for $G$.}{$k$.}{Does $G$ have a clique of size $k$ including vertices of all $k$ colors? That is, are there $v_1, \dots, v_k \in V(G)$ such that for all $1 \le i < j \le k: \set{ v_i, v_j} \in E(G)$ and $c(v_i) \neq c(v_j)$?}

\begin{proposition}
\label{prop9}
\acfougdbu{} is \W{1}-hard.
\end{proposition}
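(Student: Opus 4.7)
The plan is an fpt-reduction from \mclique{} to \gdbu{}, ensuring that each of the five parameters $a, c, f, o, u$ is linear in the clique parameter $k$; the unparameterized quantities $e$ (events per event model) and $p$ (propositions) will be allowed to grow with $|V(G)|$. Since \mclique{} is \W{1}-hard, this establishes the result.

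Given an instance $(G, \mathit{col}, k)$ of \mclique{}, I will use agents $\alpha_1, \dots, \alpha_k, \mtext{anti}$ (so $a = k+1$) and propositions $\SB \mtext{chosen}_v \SM v \in V(G) \SE \cup \SB \mtext{picked}_i \SM 1 \le i \le k \SE \cup \SB \mtext{bad} \SE$. The initial state $s_0$ is a single designated world $w_0$ in which all propositions are false and every agent's relation is reflexive. For each color $i \in \set{1, \dots, k}$ I introduce a single-pointed event model $\mathcal{E}_i$ with a designated ``pass'' event $e_0^{(i)}$ (precondition $\top$, trivial postcondition) together with one event $e_v^{(i)}$ per vertex $v$ of color $i$ (precondition $\top$, postcondition $\mtext{chosen}_v \wedge \mtext{picked}_i$); inside $\mathcal{E}_i$ the relation $Q_{\alpha_i}$ is the complete equivalence relation on these events, while every other agent's relation is the identity. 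A final single-pointed event model $\mathcal{E}_{\mtext{check}}$ has a designated pass event $e_0^{(c)}$ and, for every non-edge $\SB u, v \SE \not\in E(G)$, an event $e_{u,v}^{(c)}$ with precondition $\mtext{chosen}_u \wedge \mtext{chosen}_v$ and postcondition $\mtext{bad}$; inside $\mathcal{E}_{\mtext{check}}$, the agent $\mtext{anti}$ relates $e_0^{(c)}$ to each $e_{u,v}^{(c)}$ (plus reflexive loops), and every other agent's relation is the identity.

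Unfolding the definition of product update one step at a time, I will verify that in the final model the relation $R_{\alpha_i}$ relates exactly those worlds differing only in the $i$-th event coordinate, and similarly for $R_{\mtext{anti}}$ on the last coordinate. The formula
\[
\phi \;=\; \hat{B}_{\alpha_1} \dots \hat{B}_{\alpha_k} \Bigl( \bigwedge_{i=1}^{k} \mtext{picked}_i \;\wedge\; \neg \hat{B}_{\mtext{anti}}\, \mtext{bad} \Bigr)
\]
therefore existentially selects one vertex $v_i$ per color, forces each color to be actually chosen (via the $\mtext{picked}_i$ conjunction, which rules out the pass events), and asserts that the resulting set is a clique. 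The last assertion is correct because, by the precondition of $e_{u,v}^{(c)}$ together with the definition of product update, a $\mtext{bad}$-world is $R_{\mtext{anti}}$-reachable from the world encoding $(v_1, \dots, v_k)$ iff there exists some non-edge $\SB u, v \SE \not\in E(G)$ with $u, v \in \SB v_1, \dots, v_k \SE$. Hence $\phi$ is satisfied at the final state iff $G$ has a multicolored $k$-clique.

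Finally I will verify the parameter bounds: $a = k+1$, $c = O(1)$, $f = O(k)$, $o = k+1$, and $u = k+1$, so $a + c + f + o + u = O(k)$; the construction is polynomial-time, so the reduction is fpt. The main conceptual obstacle is checking all $\binom{k}{2}$ adjacencies with a formula of size only $O(k)$; this is handled by offloading the pairwise tests into the preconditions of the single check action, so that a single modality $\hat{B}_{\mtext{anti}}$ sweeps over every non-edge pair at once while the postconditions $\mtext{chosen}_v$ serve as persistent memory of the earlier vertex choices.
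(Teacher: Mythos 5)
Your reduction is correct and follows essentially the same strategy as the paper's (which is only sketched there, with details deferred to the thesis): an fpt-reduction from \mclique{} in which the product updates generate one world per color-respecting selection of vertices, the graph's adjacency structure is folded into the model via preconditions rather than into the formula, and the parameters $a$, $c$, $f$, $o$, $u$ all stay bounded by $O(k)$ while $e$ and $p$ absorb the size of $G$. The specific device of marking non-edges with a \emph{bad} witness reachable via one extra agent is a clean way to keep the formula linear in $k$, and all the verification steps (persistence of postconditions, applicability, and the behaviour of $R_{\alpha_i}$ under the identity relations in the other event models) check out.
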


\begin{proof}
\sloppypar
We specify an fpt-reduction $R$ from \mclique{} to \acfougdbu{}. 
Let $(G,c)$ be an instance of \mclique{}, where~$G=(N,E)$. {The general idea  behind 
this reduction is that
we use the worlds in the model to list all $k$-sized 
subsets of the vertices in the graph with~$k$ different colors, where each individual world
represents a particular $k$-subset of vertices in the graph (with $k$ different colors). 
Then we encode (in the model) the existing edges 
between these nodes (with particular color endings), and in the 
final updated model we check whether there is a world
corresponding to a $k$-subset of vertices that is pairwise
fully connected with edges. 
This is only the case when~$G$
has a~$k$-clique with~$k$ different colors. }

For more details, we refer to~\cite{vandepol2015}.
\end{proof}

\begin{proposition}
\label{thm13}
\copugdbu{} is \W{1}-hard.
\end{proposition}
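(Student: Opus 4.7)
The plan is to specify an fpt-reduction from \mclique{} to \copugdbu{}, following the same general template as Proposition~\ref{prop9} but reorganizing the gadget so that the roles previously played by propositions and by many update-rounds are now played by agents and by the formula, respectively. Given an instance $(G,c)$ of \mclique{} with colors $1,\dots,k$, I would construct an instance of \copugdbu{} in which the number of agents, events and the formula size grow polynomially in $|G|$, while the parameters~$c$ (precondition size), $o$ (modal depth), $p$ (number of propositions) and $u$ (number of actions) are all bounded by a computable function of~$k$.

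First, I would introduce one agent $a_i$ for each color $i\in\{1,\dots,k\}$ (so $a=k$ is free), plus a small constant pool of auxiliary agents used to enforce consistency across choices, and a constant number of propositions (e.g.\ just $y$ and a ``selected'' marker). The initial pointed model would contain a hub world together with one world $w_v$ per vertex $v\in V(G)$, where $R_{a_i}$ accesses exactly those $w_v$ with $c(v)=i$, and where cross-color edges are encoded by additional $R_{a_j}$-arcs from $w_v$ to $w_u$ precisely when $\{u,v\}\in E(G)$ and $c(u)=j$. This lets a modal chain $\hat{B}_{a_1}\hat{B}_{a_2}\dots\hat{B}_{a_k}\top$ traverse a sequence of vertices, one per color, with consecutive members adjacent; modal depth is exactly $k$, and every precondition appearing inside the construction has constant size.

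The crux is turning ``consecutive adjacency'' into ``pairwise adjacency'', i.e.\ genuine clique membership, without blowing up $o$ or $p$. My plan is to use a bounded number $u=g(k)$ of actions (for example one per unordered pair of colors) whose only effect is to refine the accessibility structure: each action has one event of constant precondition size that locally deletes witnesses unless the appropriate pair of selected vertices is connected by an edge. Because $e$ and the formula size $f$ are free, each such action can enumerate all vertex-pair events, while the product update leaves at most $|V|^2$-sized refinements and does not add propositional bookkeeping. The final formula is a conjunction, over all $\binom{k}{2}$ color pairs, of $O(k)$-depth modal sentences that verify the surviving structure, and overall modal depth remains $O(k)$.

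Correctness then follows by arguing, by induction on $k$, that a ``witness branch'' survives all $u$ updates and satisfies the formula iff the $k$ selected vertex-worlds form a multicolored clique in $G$. The main obstacle, as already hinted, is exactly the pairwise-versus-sequential issue: naive modal quantification over colors only forces a rainbow path, and with $p$, $c$, and $o$ all bounded one cannot directly ``remember'' which vertex was chosen for each color. Overcoming this is the reason for spending the budget of $u=g(k)$ updates as pairwise-edge filters rather than on selecting vertices sequentially (as was done in Proposition~\ref{prop9}, where $u$ was unbounded but $a$ was restricted). Full technical details, in the spirit of Propositions \ref{proposition6}--\ref{prop9}, would be deferred to~\cite{vandepol2015}.
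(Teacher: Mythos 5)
Your parameter accounting is fine ($a$, $e$, $f$ free; $c$, $o$, $p$, $u$ bounded by a function of~$k$), and you have correctly located the crux---that a modal chain $\hat{B}_{a_1}\cdots\hat{B}_{a_k}$ only certifies a rainbow path, not a clique. But the device you propose to close that gap does not work as described, and this is precisely the missing idea. In a product update, an event's precondition is evaluated at a \emph{single world} of the current model, so it can only test properties of that world's generated submodel. In your construction each world $w_v$ represents one vertex; the ``selection'' of one vertex per color exists only as a path traversed during the evaluation of the final formula, and is never materialized as a structure in the model. Consequently an action ``per unordered pair of colors'' has nothing to filter on: no world of the model encodes which pair of vertices was selected, so no precondition of any size (let alone constant size) can ``delete witnesses unless the appropriate pair of selected vertices is connected by an edge.'' To make a filtering step meaningful you must first use the $g(k)$ updates to \emph{build} worlds (or groups of worlds) that represent partial selections---$k$-tuples of vertices, or assignments---and then the adjacency/consistency conditions can be checked by preconditions or by a bounded-depth formula at those worlds. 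That is exactly what the paper's reductions do, and your construction omits it.

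For comparison, the paper's proof of this proposition does not go through \mclique{} at all: it reduces from \pwsat{}, first complementing the formula so that the target becomes an assignment setting $m-k$ variables to true, and then uses the $g(k)$ updates to generate, as groups of worlds in the style of Theorem~\ref{thm4}, all assignments obtained by flipping $k$ variables to false. There each group \emph{is} the materialized selection, membership of a variable in the assignment is readable by a constant-size, constant-depth formula $\hat{B}_a\hat{B}_i y$ using the unbounded supply of agents, and the 2CNF clauses are checked by a large but modally shallow conjunction, keeping $c$, $o$, $p$ constant and $u$ bounded by $g(k)$. If you want to keep your \mclique{} starting point, the repair is to spend the $k$ updates selecting one vertex per color and recording the selection in the accessibility structure via vertex-indexed agents (as $a$ is free), after which pairwise adjacency is a depth-one, polynomial-size formula; as written, however, your argument has a genuine gap at its central step.
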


\begin{proof}
\sloppypar
We specify the following fpt-reduction~$R$ from
\pwsat{} 
to \copugdbu{}.  
We sketch the general idea behind the reduction.
Let $\phi$ 
be a propositional formula with \textit{var}$(\phi) = \set{x_1, \dots , x_m}$. 
Then let~$\phi'$ be {the formula obtained from~$\phi$, by replacing each
occurrence of~$x_i$ with~$\neg x_i$}.
We note that~$\phi$ is satisfiable by some assignment~$\alpha$ that 
sets~$k$ variables to true if and only if~$ \phi'$ is satisfiable by 
some assignment~$\alpha'$ that sets~$m-k$ variables to true, i.e.,
that sets~$k$ variables to false.
We use the reduction to list all possible assignments 
to \textit{var}$(\phi')$ = \textit{var}$(\phi)$ that set~$m - k$ variables to true.
We represent each possible assignment 
to \textit{var}$(\phi)$ that sets~$m - k$ variables to true 
as a group of worlds, like in the proof of Theorem~\ref{thm4}. (In fact, due to the details 
of the reduction,
in the final updated model, there
will be several identical groups of worlds for 
each of these assignments).

For more details, we refer to~\cite{vandepol2015}.
\end{proof}

\begin{proposition}
\label{prop15}
\afopugdbu{} is \W{1}-hard.
\end{proposition}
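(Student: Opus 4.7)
The plan is to give an fpt-reduction from \mclique{} to \afopugdbu{}. Given a graph $G$ with color classes $V_1,\dots,V_k$ and $n = |V(G)|$ vertices, I would produce a DBU instance in which the number of agents is $O(k)$, the number of propositions is $O(1)$, the number of actions is $O(1)$, and both the size and modal depth of the query formula are $O(k)$. The combinatorial complexity of the reduction is absorbed entirely into the preconditions (allowed since $c$ is not a parameter) and the number of events per event model (allowed since $e$ is not a parameter).

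First, I would construct the initial state $s_0$ so that its designated worlds are in bijection with the tuples $(v_1,\dots,v_k) \in V_1 \times \cdots \times V_k$. Because only $O(1)$ propositions are available, each tuple must be identified by its accessibility structure rather than by a valuation. Following the chain-of-worlds technique from the proof of Theorem~\ref{thm4}, from every central world $w_{(v_1,\dots,v_k)}$ I would attach, for each color~$i$, a chain along relation $R_i$ whose length encodes the index of $v_i$ among the vertices of color~$i$, terminating in a world where a marker proposition $p$ holds and where $p$ is false at intermediate worlds. Each central world then carries a unique structural fingerprint detectable by a modal formula $\rho_i(v)$ of depth at most $n$ asserting ``the $R_i$-chain at the current world has length exactly the index of~$v$.''

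Second, I would apply a single action whose designated event has precondition
\[
\chi \;=\; \bigwedge_{\substack{1 \le i < j \le k \\ u \in V_i,\, v \in V_j,\, \{u,v\} \notin E(G)}} \neg\bigl(\rho_i(u) \wedge \rho_j(v)\bigr).
\]
This formula has size and modal depth polynomial in $n$, which is unconstrained. After the product update, a central world survives exactly when the tuple it represents forms a multicolored $k$-clique in $G$. A technicality about applicability is resolved by adding an auxiliary always-applicable designated event (paired with a dummy designated world in $s_0$) so that the action is applicable even when $G$ has no $k$-clique. The query formula $\phi$ is then a nested $\hat{B}_1 \hat{B}_2 \cdots \hat{B}_k$ wrapped around a marker that distinguishes the surviving clique worlds from the dummy; it has size and modal depth $O(k)$ and is satisfied in the updated model iff $G$ admits a multicolored $k$-clique.

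The main obstacle is the tension between having only $O(k)$ propositions while needing to distinguish exponentially many candidate configurations, and having only an $O(k)$-size query formula while needing to check a nontrivial combinatorial property. My plan resolves both by pushing all combinatorial content into the unbounded precondition and the accessibility structure of $s_0$, leaving the query formula with only the task of witnessing non-emptiness of the survivors among the designated worlds. Correctness then reduces to verifying that the structural fingerprints in $s_0$ are faithful and that $\chi$ excludes precisely the non-clique tuples, both of which follow by a straightforward case analysis on the semantics of $\rho_i$.
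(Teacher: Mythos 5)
Your parameter accounting is right in spirit---$a,f,o,p,u$ must stay bounded by a function of $k$ while $c$ and $e$ may grow---but the reduction as described is not an fpt-reduction, because it is not computable in fpt-time. You build $s_0$ with one designated (central) world per tuple in $V_1 \times \cdots \times V_k$, i.e.\ up to $n^{\Theta(k)}$ worlds, each additionally carrying $k$ chains of length up to $n$. The output instance therefore has size $n^{\Theta(k)}$, which is not of the form $f(k)\cdot p(|x|)$, violating condition~2 of the definition of fpt-reduction. The whole point of leaving $e$ unparameterized while $u$ is parameterized is that you may generate the exponentially many candidate configurations \emph{inside the product update} (a small $s_0$ followed by $O(k)$ actions, each with up to $n$ events, yields $n^{O(k)}$ worlds in the \emph{final updated model} while the \emph{instance} stays polynomial); materializing them directly in $s_0$ forfeits exactly this. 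This is how the paper proceeds: it adapts the \pwsat{}-based reduction of Proposition~\ref{thm13}, using a bounded number of updates to list the candidate assignments as groups of worlds, strings with alternating $R_a$/$R_b$ relations to keep the agent count constant, and a single marker proposition $z^*$ reachable via an extra relation to keep the query formula of constant size. Your chain-length ``fingerprints'' and the large precondition $\chi$ are a legitimate use of the unbounded parameters, but they must be driven by event models applied to a small initial state, not by a pre-built exponential $s_0$.

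Two smaller points would also need repair. First, truth in a multi-pointed model is universal over designated worlds, so a formula cannot directly ``witness non-emptiness of the survivors'' among many designated worlds; you need a single designated world that sees all candidates via an extra relation and a query of the form $\hat{B}_0(\cdot)$, which is the role the $R_c$-reachable $z^*$-world plays in the paper's construction. Second, your always-applicable auxiliary event prevents any world from being deleted by the update (every $w$ survives as $(w,e_0)$), so the filtering must instead be effected by a postcondition that marks the worlds satisfying $\chi$---fixable, but it has to be said, since otherwise the query formula cannot distinguish clique tuples from non-clique tuples.
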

\begin{proof}
\sloppypar
We specify the following fpt-reduction~$R$ from \pwsat{}
to \afopugdbu{}. 
We modify the reduction in the proof of Proposition~\ref{thm13} to keep the values of parameters~$a$ and $f$ 
constant. After  these modifications, the value of parameter~$c$ 
will no longer be constant.
To keep the number of agents constant, 
we use the same {strategy} as in the reduction in the proof of 
Proposition~\ref{prop12}, {where variables~$x_i,\dots,x_m$ 
are represented by strings of worlds with alternating relations~$R_b$ and~$R_a$.}
Just like in the proof of Proposition~\ref{prop12}, the size of the formula 
 (and consequently the modal depth of the formula) 
 is kept constant by encoding
 the satisfiability  of the formula with a single proposition.
 Then each group of worlds that represents a satisfying assignment for
 the given formula, will have an~$R_c$ relation from a world that 
 is~$R_b$-reachable from the designated world to a world where
 proposition~$z^*$ is true.

For more details, we refer to~\cite{vandepol2015}.
\end{proof}

\subsubsection{Tractability Results}
\label{section:tractabilityresults}

 Next, we turn to a case that is fixed-parameter tractable.

\begin{theorem}
\eugdbu{} is fixed-parameter tractable.
\end{theorem}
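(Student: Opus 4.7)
The plan is to compute the full updated model $s_0 \otimes a_1 \otimes \cdots \otimes a_u$ explicitly, and then run the standard recursive model-checking algorithm for $\mathcal{L}_B$ on the resulting Kripke model. The crucial observation is that the number of worlds in the updated model blows up by at most a factor of $e$ with each update, so after $u$ updates it is bounded by $|W_0| \cdot e^u$, which is of the form $f(e,u) \cdot |\text{input}|$ for a computable function $f$.

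More concretely, first I would compute the sequence of intermediate models $M_i = s_0 \otimes a_1 \otimes \cdots \otimes a_i$ for $i = 1, \ldots, u$. To compute $M_i$ from $M_{i-1}$, for every pair $(w, e)$ with $w$ a world of $M_{i-1}$ and $e$ an event of $a_i$, I check whether $M_{i-1}, w \models \pre(e)$ and, if so, add the world $(w,e)$ with the appropriate valuation (using $\post(e)$) and accessibility relations. Each precondition check reduces to model checking a formula (of size bounded by the input) on $M_{i-1}$, which is polynomial in $|M_{i-1}|$ and $|\pre(e)|$ by the standard recursive algorithm. Since $|M_{i-1}| \le |W_0| \cdot e^{i-1}$, all precondition checks for step $i$ run in time polynomial in $|W_0| \cdot e^{u} \cdot |\text{input}|$, and the whole sequence of updates is computed in time $e^u \cdot \mathrm{poly}(|\text{input}|)$.

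Second, I would evaluate $\phi$ in $M_u$ by the standard recursive model-checking procedure for modal logic: propositional atoms are read from the valuation, Boolean connectives are handled compositionally, and $B_a\psi$ at a world $w$ is evaluated by recursively checking $\psi$ at every $R_a$-successor of $w$. This runs in time $O(|M_u| \cdot |\phi|) = O(|W_0| \cdot e^u \cdot |\phi|)$.

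Combining the two steps gives total running time of the form $e^u \cdot \mathrm{poly}(|\text{input}|)$, which is fpt with respect to the parameter $e + u$, establishing fixed-parameter tractability. There is no real obstacle here: the only thing to be careful about is bookkeeping, namely that the cost of model checking the preconditions at each update stage is itself absorbed into the polynomial factor, using the fact that the size of every $M_i$ is already fpt-bounded in $e + u$.
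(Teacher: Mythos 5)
Your proposal is correct and follows essentially the same route as the paper's proof: sequentially compute the intermediate updated models, bound their size by $O(|s_0|\cdot e^u)$, absorb the polynomial-time precondition checks and the final epistemic model checking into the fpt bound. No substantive differences.
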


\begin{proof}
We present the following fpt-algorithm that runs in time~$e^u \cdot p(| 
x|)$, for some polynomial $p$, where~$e$ is the maximum number
of events in the actions and~$u$ is the number of updates, i.e., the 
number of actions.

As a subroutine, the algorithm checks whether a given basic epistemic
formula~$\phi$ holds in a given epistemic model~$M$, i.e.,
whether~$M \models \phi$.  
%It is well-known that model checking for basic epistemic logic is in \P{}, (see e.g.~\cite{blackburn2006}).
%Model checking can be done in time polynomial in the size of~$M$ plus the
%size of~$\phi$,
%by first using the standard translation \cite[Definition 2.1]{vanbenthem1977}
%to transform~$\phi$ to an equivalent first-order formula~$\phi'$
%(with two variables)
%and subsequently checking the truth of~$\phi'$ in~$M$
%\cite[Proposition 3.1]{vardi1995}.
It is well-known that model checking for basic epistemic logic can be done in time polynomial in the of~$M$ plus the
size of~$\phi$ (see e.g.~\cite{blackburn2006}).

Let $x = (P, \A, i, s_0, a_1, \dots, a_f, \phi)$ be an instance of \gdbu{}. First the algorithm computes the final updated model~$s_f = s_0 \otimes a_1 \otimes \dots \otimes a_f $ 
by sequentially performing the updates. 
For each $i$, $s_i$ is defined as $s_{i-1} \otimes a_i$. The size of each $s_i$ is upper bounded by~$O(|s_0| \cdot e^u)$, 
so for each update checking the preconditions can be done in 
time polynomial in~$e^u \cdot |x|$. 
This means that computing $s_f$ can be done in 
fpt-time.%
%\footnote{We point out that for any computable function $f$ and any polynomials $p,q$, we can find another computable function $f'$ and another polynomial $p'$, such that for all $n,k \in \mathbb{N}$ it holds that $q(f(k) \cdot p(n)) \leq f'(k) \cdot p'(n)$. Intuitively, this expresses that a polynomial composed with an fpt-function, is still an fpt-function.}

Then, the algorithm decides whether~$ \phi$ is true in~$s_f$. This can be done in time polynomial in the size of $s_f$ plus the size of~$ \phi$. We know that~$|s_f| + |\phi|$ is upper bounded by {$O(|s_0| \cdot e^u) + |\phi|$}, thus upper bounded by~$e^u \cdot p(|x|)$, for some polynomial $p$.
%is [fixed-parameter tractable] in the size of the input $|x|$.
Therefore, deciding whether~$\phi$ is true in~$s_f$ is fixed-parameter tractable.
Hence, the algorithm decides whether $x \in$ \gdbu{} and runs in fpt-time.
\end{proof}

\subsubsection{Overview of the Results}

\sloppypar
We showed that \gdbu{} is \pspace{}-complete, we presented several
parameterized intractability results ({\W{1}}-hardness and para-\np{}-hardness) and we presented one fixed-parameter tractable result, namely 
for \eugdbu{}. {In Figure~\ref{fig:overview}}, we present a graphical
overview of our {results} and 
the consequent border between fpt-tractability and fpt-intractability 
{for the problem} \gdbu{}. We leave \acpgdbu{} and \cfpugdbu{} as 
open problems for future research.

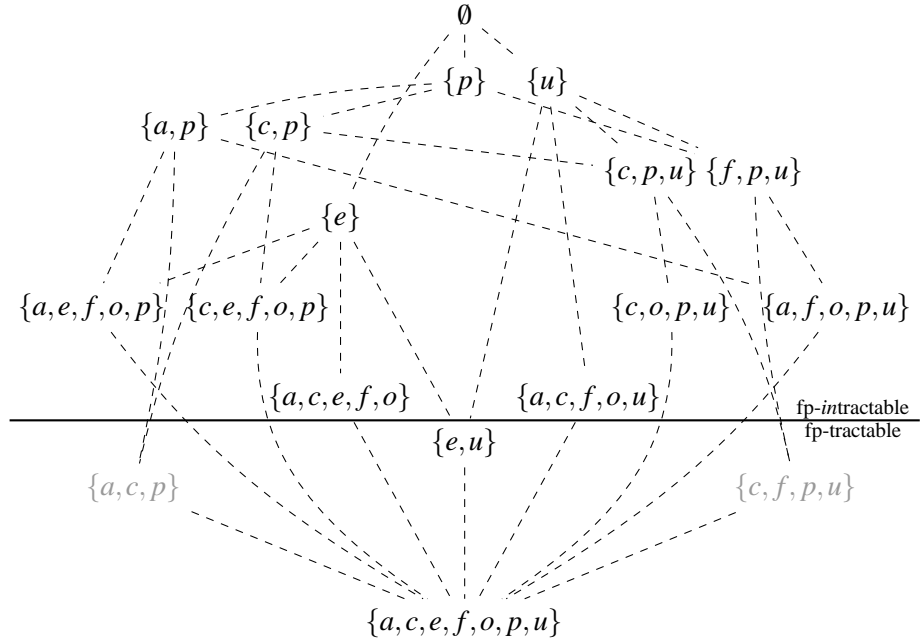
\begin{figure*}
\begin{center}
\begin{tikzpicture}[xscale=1.1, yscale=0.6]
  \tikzstyle{direct}=[draw,dashed]
  \tikzstyle{indirect}=[draw,dashed,dash pattern=on 1mm off 1mm,gray]
  % NODES
  \node[] (-) at (0,11.5) {$\emptyset$};
  \node[] (p) at (0,10) {$\SBs p \SEs$};
  \node[] (u) at (1,10) {$\SBs u \SEs$};
  \node[] (e) at (-1.5,7) {$\SBs e \SEs$};
  \node[] (eu) at (0,2) {$\SBs e,u \SEs$};
  \node[] (acfou) at (1.5,3) {$\SBs a,c,f,o,u \SEs$};
  \node[] (acefo) at (-1.5,3) {$\SBs a,c,e,f,o \SEs$};
  \node[] (ap) at (-3.5,9) {$\SBs a,p \SEs$};
  \node[] (cp) at (-2.25,9) {$\SBs c,p \SEs$};
  \node[] (fpu) at (3.5,8) {$\SBs f,p,u \SEs$};
  \node[] (cpu) at (2.25,8) {$\SBs c,p,u \SEs$};
  \node[] (aefop) at (-4.5,5) {$\SBs a,e,f,o,p \SEs$};
  \node[] (cefop) at (-2.5,5) {$\SBs c,e,f,o,p \SEs$};
  \node[] (afopu) at (4.5,5) {$\SBs a,f,o,p,u \SEs$};
  \node[] (copu) at (2.5,5) {$\SBs c,o,p,u \SEs$};
  \node[] (acp) at (-4,1) {\textcolor{black!40}{$\SBs a,c,p \SEs$}};
  \node[] (cfpu) at (4,1) {\textcolor{black!40}{$\SBs c,f,p,u \SEs$}};
  \node[] (acefopu) at (0,-2) {$\SBs a,c,e,f,o,p,u \SEs$};
  % EDGES
  \path[direct] (-) edge[bend right=5] (e);
  \path[direct] (-) -- (u);
  \path[direct] (e) -- (eu);
  \path[direct] (u) -- (eu);
  \path[direct] (-) -- (p);
  \path[direct] (p) edge[bend right=10] (ap);
  \path[direct] (p) -- (cp);
  \path[direct] (p) -- (fpu);
  %\path[direct] (p) -- (cpu);
  \path[direct] (cp) -- (cpu);
  \path[direct] (ap) -- (afopu);
  \path[direct] (e) -- (aefop);
  \path[direct] (e) -- (cefop);
%  \path[direct] (u) -- (afopu);
%  \path[direct] (u) -- (copu);
  \path[direct] (u) -- (cpu);
  \path[direct] (u) -- (fpu);
  \path[direct] (ap) -- (aefop);
  \path[direct] (cp) -- (cefop);
  \path[direct] (fpu) -- (afopu);
  \path[direct] (cpu) -- (copu);
  \path[direct] (acp) -- (acefopu);
  \path[direct] (cfpu) -- (acefopu);
  \path[direct] (aefop) edge[bend right=10] (acefopu);
  \path[direct] (cefop) edge[bend right=20] (acefopu);
  \path[direct] (afopu) edge[bend left=10] (acefopu);
  \path[direct] (copu) edge[bend left=20] (acefopu);
  \path[direct] (ap) edge[bend left=3] (acp);
  \path[direct] (cp) edge[bend right=5] (acp);
  \path[direct] (cpu) edge[bend left=7] (cfpu);
  \path[direct] (fpu) edge[bend right=3] (cfpu);
  \path[direct] (eu) -- (acefopu);
  \path[direct] (e) -- (acefo);
  \path[direct] (u) -- (acfou);
  \path[direct] (acfou) -- (acefopu);
  \path[direct] (acefo) -- (acefopu);
%  \path[indirect] (x) -- node[fill=white] {$\dotsc$} (y);
  % FPT vs NON-FPT LINE
  \path[draw,thick,black] (-5.5,2.5) -- (5.5,2.5);
  \node[] (fpt-label) at (4.7,2.25) {\scriptsize fp-tractable};
  \node[] (fpi-label) at (4.7,2.75) {\scriptsize fp-\emph{in}tractable};
\end{tikzpicture}
\end{center}
\caption{Overview of the parameterized complexity results
for the different parameterizations of DBU, and the line between
fp-tractability and fp-intractability
(under the assumption that the cases for~$\SBs a,c,p \SEs$
and~$\SBs c,f,p,u \SEs$ are fp-tractable).}
\label{fig:overview}
\end{figure*}

%%% DISCUSSION %%%%%%%%%%%%%%%%%%%%%%%%%%%%%%%%%%%%%%%%%%%%%%%%%%%%
\section{Discussion \& Conclusions}
\label{discussion}

We presented the {\sc Dynamic Belief Update} {model} as a computational-level model of ToM
and analyzed its complexity. 
The aim of our model was to provide a formal approach that can be used to interprete and evaluate the meaning and veridicality of various complexity claims
in the cognitive science and philosophy literature concerning ToM. 
In this way, we hope to contribute to disentangling 
 debates in cognitive science and philosophy regarding the complexity of ToM.  

In Section~\ref{sectiongeneralcomplex}, we proved that \gdbu{} is \pspace{}-complete.
This means that (without additional constraints), there 
is no algorithm that computes \gdbu{} in a reasonable amount of time. 
In other words, without restrictions on its input domain,
the model is computationally too hard  to serve as a
plausible explanation for human cognition.
This may not be surprising, but it is a first formal proof 
backing up this claim, whereas so far claims of intractability 
in the literature remained informal.

Informal claims about what constitutes sources of intractability  abound in cognitive science. For instance, it seems to be folklore that the `order' of ToM reasoning (i.e., that I think that you think that I think \dots) is a potential source of intractability. 
The fact that people have difficulty understanding higher-order
theory of mind \cite{verbrugge2008,kinderman1998,lyons2010,stiller2007} is not explained by the complexity results for 
parameter~$o$ -- the modal depth of the formula that is being 
considered, in other words, the order parameter.
Already for a formula with modal depth one, \gdbu{} is NP-hard; 
so \ogdbu{} is not fixed-parameter tractable. 
On the basis of our results we can only conclude that \gdbu{} is fixed-parameter tractable for the order parameter in combination with 
parameters~$e$ and~$u$.
But since \gdbu{} is fp-tractable for the smaller 
parameter set~$\set{e,u}$, this does not indicate that the 
order parameter is a source of complexity.
This does not mean it may not be a source of difficulty for human ToM performance. After all, tractable problems can be too resource-demanding for humans for other reasons than computational complexity (e.g., due to stringent working-memory limitations).  

Surprisingly, we only found one (parameterized)
tractability result {for \gdbu}.
We proved that for parameter set $\set{e,u}$ -- the 
maximum number of events in an event model and the number of updates, 
i.e., the number of event models -- \dbu{} is fixed-parameter 
tractable. 
Given a certain instance~$x$  of \gdbu{}, the
values of parameters~$e$ and~$u$ (together with the size of initial 
state~$s_0$) determine the size of the final updated model (that 
results from applying the event models to the initial state). Small values 
of~$e$ and~$u$ thus make sure that
the final updated model does not blow up {too} much in relation to the 
size of the initial model.  
The result that \eugdbu{} is {fp-tractable} {indicates}
that the size of the final updated model {can be} 
a source of intractability (cf.~\cite{vanrooij2008analogical,vanRooij2008opportunities}).

The question arises how we can interpret  parameters~$e$ and $u$ 
in terms of their cognitive counterparts. To what aspect of ToM do 
they correspond, and moreover, can we assume that {they} have small
values in {(many)} real-life situations? If this is indeed the
case, then restricting the input domain of {the model}
to those inputs that have sufficiently small values
for parameters~$e$ and~$u$ will {render} our model 
tractable, and we can then argue that (at least in 
terms of its computational complexity) it is a cognitively plausible
model.

In his formalizations of the false belief task Bolander
\cite{bolander2014} indeed used a limited amount 
of actions with a limited amount of events in each action (he used
a maximum of 4). This {could,} however, be {a} consequence of
the over-simplification {(of real-life situations)} 
used in experimental tasks.
Whether these parameters in fact have sufficiently
small values in real life, 
is an 
empirical hypothesis that can (in principle) be tested 
experimentally.
 However, it is not straightforward how to interpret 
 these formal aspects of the model in terms of their
 cognitive counterparts.
 The associations that the words 
 \textit{event} and \textit{action} trigger with how we often 
 use these words in daily life, might adequately apply to some  
 degree, but could also be misleading. A structural way of 
 interpreting these parameters is called for. We think this
 is an interesting topic for future research.

Besides the role that our results play in
the investigation of (the complexity) of ToM
our results are also of interest in and of themselves.
The results in Theorems~\ref{thm4} and~\ref{thm5} 
resolve an open question in the literature about the computational
complexity of DEL.
Aucher and Schwarzentruber \cite{aucher2013} already showed that the model checking problem
for DEL, in general, is \pspace{}-complete. However, their proof for \pspace{}-hardness
does not work when the input domain is restricted to S5 (or KD45) models 
and their hardness proof also relies on the use of multi-pointed
models (which in their notation is captured by means of a union operator).
With our proof of Theorem~\ref{thm4}, we show that
DEL model checking is PSPACE-hard even when restricted to
single-pointed S5 models.
{Furthermore}, the novelty of our aproach lies in the fact
that we apply parameterized complexity analysis to dynamic 
epistemic logic, which is still a {rather unexplored area.}

\section*{Acknowledgements} 
We thank the reviewers for their comments. We thank Thomas Bolander, Nina Gierasimczuk, Ronald de Haan and Martin Holm Jensen and the members of the Computational Cognitive Science group at the Donders Centre for Cognition for discussions and feedback.

% *** REFERENCES ********************
\DeclareRobustCommand{\VAN}[3]{#3}

%\nocite{*}
\providecommand{\doi}[1]{\textsc{doi}: \href{http://dx.doi.org/#1}{\nolinkurl{#1}}}
\bibliographystyle{eptcs}
%\bibliography{selected}

\newcommand{\noopsort}[1]{}

\end{document}